\newif \iffinal \finalfalse
\newcommand{\qed}{\hfill\rule{7pt}{7pt}}
\newtheorem{theorem}{Theorem}[section]
\newtheorem{lemma}[theorem]{Lemma}
\newtheorem{observation}[theorem]{Observation}
\newenvironment{proof}{\noindent{\bf Proof:}}{\qed\newline\medskip}
\newtheorem{definition}[theorem]{Definition}
\newtheorem{fact}[theorem]{Fact}
\numberwithin{equation}{section}
\newcommand{\fousumeltbl}{\hat{f}(S)}
\newcommand{\alg}{\mathcal{A}}
\newcommand{\cc}{\mathfrak{C}}
\newcommand{\rnote}[1]{}
\newcommand{\anote}[1]{}
\newcommand{\remove}[1]{}
   \newcommand{\rnote}[1]{\footnote{{\bf [[Rocco: {#1}\bf ]] }}}
   \newcommand{\anote}[1]{\footnote{\bf [[Alp: {#1}\bf ]]}}
   \newcommand{\remove}[1]{\par $<<${\it removed part}$>>$}
   \newcommand{\old}[1]{}
\newcommand{\dist}{\mathscr{D}}
\newcommand{\bj}{\text{FS}}
\newcommand{\ex}{\mathsf{EX}}
\newcommand{\qex}{\mathsf{QEX}}
\newcommand{\qmq}{\mathsf{QMQ}}
\newcommand{\dnf}{\mathsf{DNF}}
\newcommand{\mq}{\mathsf{MQ}}
\newcommand\conj[1]{\overline{#1}}
\newcommand{\ignore}[1]{}
\newcommand{\bits}{\{-1,1\}}
\newcommand{\bn}{\bits^n}
\newcommand{\R}{{\bf R}}
\newcommand{\Inf}{\mathrm{Inf}}
\newcommand{\E}{{\bf E}}
\renewcommand{\P}{\mathbf{Pr}}
\newcommand{\eps}{\epsilon}
\newcommand{\sgn}{\mathrm{sgn}}
\newcommand\innp[2]{\langle {#1},{#2} \rangle}
\newcommand{\fisafunc}{f \colon \bn \to \bits}
\begin{document}
\title{Quantum Algorithms for Learning and Testing Juntas}
\date{July 20, 2007}

\author{Alp At\i c{\i}}
\thanks{Work done while at the Department of
Mathematics, Columbia University, New York, NY 10027}
\email{alpatici@gmail.com}
\affiliation{
Citadel Investment Group\\
Chicago, IL 60603}

\author{Rocco~A. Servedio}
\thanks{Supported in
part by NSF award CCF-0347282, by NSF award
CCF-0523664, and by a Sloan Foundation Fellowship.}
\email{rocco@cs.columbia.edu}
\affiliation{
Department of Computer Science\\
Columbia University\\
New York, NY 10027}

\begin{abstract}
In this article we develop quantum algorithms for learning and
testing \emph{juntas}, i.e. Boolean functions which depend only on
an unknown set of $k$ out of $n$ input  variables. Our aim is to
develop efficient algorithms:
\smallskip

\textbullet \quad whose sample complexity has no dependence on $n$, the dimension
of the domain the Boolean functions are defined over;
\smallskip

\textbullet \quad with no access to any classical or quantum membership (``black-box'')
queries.  Instead, our algorithms use only classical examples
generated uniformly at random and fixed quantum superpositions of
such classical examples;
\smallskip

\textbullet \quad which require only a few quantum examples but possibly
many classical random examples (which are considered quite ``cheap''
relative to quantum examples).
\bigskip

Our quantum algorithms are based on a subroutine $\bj$ which enables
sampling according to the Fourier spectrum of $f$; the $\bj$
subroutine was used in earlier work of Bshouty and Jackson on
quantum learning. Our results are as follows:
\smallskip

        \textbullet \quad We give an algorithm for testing $k$-juntas to accuracy $\eps$
        that uses $O(k/\epsilon)$ quantum examples.  This improves
        on the number of examples used by the best known classical
        algorithm.
\smallskip

        \textbullet \quad We establish the following lower bound: any $\bj$-based $k$-junta testing algorithm requires $\Omega(\sqrt{k})$ queries.
\smallskip

        \textbullet \quad We give an algorithm for learning $k$-juntas to accuracy $\eps$ that uses
        $O(\epsilon^{-1} k\log k)$ quantum examples and $O(2^k \log(1/\eps))$ random examples.
        We show that this learning algorithms is close to optimal by giving a related lower bound.
\smallskip

\end{abstract}

\pacs{03.67.-a, 03.67.Lx}
\keywords{juntas, quantum query algorithms, quantum property testing, computational learning theory, quantum computation, lower bounds}

\maketitle

\section{Introduction}
\subsection{Motivation}

The field of \emph{computational learning theory} deals with the
abilities and limitations of algorithms that learn functions from
data.  Many models of how learning algorithms access data have been
considered in the literature. Among these, two of the most prominent
are via \emph{membership queries} and via \emph{random examples}.
Membership queries are ``black-box'' queries; in a membership query,
a learning algorithm submits an input $x$ to an oracle and receives
the value of $f(x)$. In models of learning from random examples,
each time the learning algorithm queries the oracle it receives a
labeled example $(x,f(x))$ where $x$ is independently drawn from
some fixed probability distribution over the space of all possible
examples. (We give precise definitions of these, and all the
learning models we consider, in Section~\ref{sec:jprelim}.)

In recent years a number of researchers have considered {quantum}
variants of well-studied models in computational learning theory,
see e.g. \cite{AKMPY,AS05,BSHJA,C06,HMPPR,IKRY,RSSG}. As we describe
in Section~\ref{sec:jprelim}, models of learning from quantum
membership queries and from fixed quantum superpositions of labeled
examples (we refer to these as \emph{quantum examples}) have been
considered; such oracles have been studied in the context of
\emph{quantum property testing} as well \cite{BFNR,FMSS,MN}. One
common theme in the existing literature on quantum computational
learning and testing is that these works study algorithms whose only
access to the function is via some form of quantum oracle such as
the quantum membership oracle or quantum example oracles mentioned
above.  For instance, \cite{BSHJA} modifies the classical Harmonic
Sieve algorithm of \cite{JACKSON} so that it uses only uniform
quantum examples to learn $\dnf$ formulas. \cite{BFNR} considers the
problem of quantum property testing using quantum membership queries
to give an exponential separation between classical and quantum
testers for certain concept classes. \cite{AS05} studies the
information-theoretic requirements of exact learning using quantum
membership queries and Probably Approximately Correct (PAC) learning
using quantum examples. Many other articles such as
\cite{RSSG,AKMPY,HMPPR} could further extend this list.

    As the problem of building large scale quantum computers remains a major challenge, it is natural to
    question the technical feasibility of large scale implementation of the quantum oracles considered in the literature.
    It is desirable to minimize the number of quantum (as opposed to classical) oracle queries or examples
    required by quantum algorithms.
    Thus motivated, in this paper we are interested in designing testing and learning
    algorithms with access to both quantum and classical sources of information
(with the goal of minimizing the quantum
    resources required).

\subsection{Our results}

All of our positive results are based on a quantum subroutine due to
\cite{BSHJA}, which we will refer to as an $\bj$ (Fourier Sample)
oracle call. As explained in Section~\ref{sec:jprelim}, a call to
the $\bj$ oracle yields a subset of $\{1,\dots,n\}$ (this set should
be viewed as a subset of the input variables $x_1,\dots,x_n$ of $f$)
drawn according to the Fourier spectrum of the Boolean function $f$.
As demonstrated by \cite{BSHJA}, such an oracle can be implemented
using $O(1)$ uniform quantum examples from a uniform distribution
quantum example oracle.  In fact, all of our algorithms will be
purely classical apart from their use of the $\bj$ oracle. Thus, all
of our algorithms can be implemented within the (uniform
distribution) quantum PAC model first proposed by \cite{BSHJA}. This
model is a natural quantum extension of the classical PAC model
introduced by Valiant \cite{Val84}, as described in
Section~\ref{sec:jprelim}.   We emphasize that no membership
queries, classical or quantum, are used in our algorithms, only
uniform quantum superpositions of labeled examples, and we recall
that such uniform quantum examples cannot efficiently simulate even
classical membership queries in general (see \cite{BSHJA}).

Our approach of focusing only on the $\bj$ oracle allows us to
abstract away from the intricacies of quantum computation, and
renders our results useful in any setting in which an $\bj$ oracle
can be provided to the user. In fact, learning and testing with
$\bj$ oracle queries may be regarded as a new distinct model (which
may possibly be weaker than the uniform distribution quantum example
model).

 We are primarily interested in the information theoretic
requirements (i.e. the number of oracle calls needed) of the
learning and testing problems that we discuss.  We give upper and
lower bounds for a range of learning and testing problems
related to \emph{$k$-juntas}; these are Boolean functions $f: \bits^n
\rightarrow \bits$ that depend only on (an unknown subset of) at most $k$
of the $n$ input variables $x_1,\dots,x_n$.  Juntas have been the subject
of intensive research in learning theory and property
testing in recent years, see e.g.~\cite{AR, AR2, Blum, CG04,
FKRSS, LMMV, MOS04}.

Our first result, in Section~\ref{sec:testjuntas}, is a $k$-junta
testing algorithm which uses $O(k\epsilon^{-1})$ $\bj$ oracle calls.
Our algorithm uses fewer queries than the best known classical junta
testing algorithm due to Fischer {\em et al.} \cite{FKRSS}, which
uses $O((k\log k)^2 \epsilon^{-1})$ membership queries. However,
since the  best lower bound known for classical membership query
based junta testing (due to Chockler and Gutfreund \cite{CG04}) is
$\Omega(k)$, our result does not rule out the possibility that there
might exist a classical membership query algorithm with the same
query complexity.

To complement our $\bj$ based testing algorithm, we establish a new
lower bound: Any $k$-junta testing algorithm that uses only a $\bj$
oracle requires $\Omega(\sqrt{k})$ calls to the $\bj$ oracle.  This
shows that our testing algorithm is not too far from optimal.

Finally, we consider algorithms that can both make $\bj$ queries and
also access classical random examples.  In
Section~\ref{sec:learnjuntas} we give an algorithm for learning
$k$-juntas over $\bits^n$ that uses $O(\epsilon^{-1} k\log k)$ $\bj$
queries and $O(2^k\log(\eps^{-1}))$ random examples. Since any
classical learning algorithm requires $\Omega(2^{k}+\log n)$
examples (even if it is allowed to use membership queries), this
result illustrates that it is possible to reduce the classical query
complexity substantially (in particular, to eliminate the dependence
on $n$) if the learning algorithm is also permitted to have some
very limited quantum information. Moreover most of the consumption
of our algorithm is from classical random examples which are
considered quite ``cheap'' relative to quantum examples. From
another perspective, our result shows that for learning $k$-juntas,
almost all the quantum examples used by the algorithm of Bshouty and
Jackson \cite{BSHJA}  can in fact be converted into ordinary
classical random examples.  We show that our algorithm is close to
best possible by giving a nearly matching lower bound.

\subsection{Organization}
In Section~\ref{sec:jprelim} we describe the models and problems we
will consider and present some useful preliminaries from Fourier
analysis and probability. Section~\ref{sec:testjuntas} gives our
results on testing juntas and Section~\ref{sec:learnjuntas} gives
our results on learning juntas.

\section{Preliminaries}
\label{sec:jprelim}
\subsection{The problems and the models}
In keeping with standard terminology in learning theory, a
\emph{concept} $f$ over $\bits^{n}$ is a Boolean function $f:
\bits^{n} \to \bits$, where $-1$ stands for \textsc{True} and $1$
stands for \textsc{False}. A \emph{concept class} $\cc = \cup_{n
\geq 1} C_n$ is a set of concepts where $C_n$ consists of those
concepts in $\cc$ whose domain is $\bits^n.$ For ease of notation
throughout the paper we will omit the subscript in $C_n$ and simply
write $C$ to denote a collection of concepts over $\bits^n$.

The concept class we will chiefly be interested in
is the class of \emph{$k$-juntas}.  A Boolean function $\fisafunc$
is a $k$-junta if $f$ depends only on $k$ out of its $n$
input variables.

\subsubsection{The problems}
We are interested in the following computational problems:
\begin{description}
    \item[PAC Learning under the uniform distribution:]
    Given any \emph{target concept}
    $f \in C$, an {\em $\epsilon$-learning algorithm for concept class $C$} under the
    uniform distribution outputs a \emph{hypothesis} function $h: \bits^n \rightarrow \bits$
    which, with probability at least $2/3$, agrees with $c$ on at least a $1-\epsilon$
    fraction of the inputs in $\bits^n.$  This is a widely studied framework
    in the learning theory literature both in classical
    (see for instance \cite{KM,JACKSON}) and in quantum (see \cite{BSHJA}) versions.

    \item[Property testing:] Let $f$ be any Boolean function $f:
    \bits^{n} \rightarrow \bits$. A {\em property testing algorithm
    for concept class $C$} is an algorithm which, given access to $f$,
    behaves as follows:
    \begin{itemize}
        \item If $f \in C$ then the algorithm outputs \textsc{Accept}
        with probability at least $2/3$;
        \item If  $f$ is \emph{$\epsilon$-far} from any concept in $C$
        (i.e. for every concept $g\in C$, $f$ and $g$ differ on at least
        an $\epsilon$ fraction of all inputs),
        then the algorithm outputs \textsc{Reject} with probability at least $2/3$.
    \end{itemize}
    The notion of property testing was first developed by \cite{GGR} and \cite{RS96}. Quantum property testing was
    first studied by Buhrman {\em et al.} \cite{BFNR}, who first gave an example of an
    exponential separation between the query complexity of classical and quantum testers
    for a particular concept class.
\end{description}

Note that a learning or testing algorithm for $C$ ``knows'' the
class $C$ but does not know the identity of the concept $f$. While
our primary concern is the number of oracle calls that our
algorithms use, we are also interested in \emph{time efficient}
algorithms for testing and learning; for the concept class of
$k$-juntas, these are algorithms running in poly$(n,2^k,\eps^{-1})$
time steps.

\subsubsection{Classical oracles}
In order for learning and testing algorithms to gather information
about the unknown concept $f$, they need an information source
called an \emph{oracle}. The number of times an oracle is queried by
an algorithm is referred to as the \emph{query complexity}.
Sometimes our algorithms will be allowed access to more than one
type of oracle in our discussion.

In this paper we will consider the following types of oracles that
provide classical information:

\begin{description}
    \item[Membership oracle $\mq$:] For $f$ a Boolean function, a
\emph{membership oracle} $\mq(f)$ is an oracle which, when queried
with input $x$, outputs the label $f(x)$ assigned by $f$ to $x.$

\item[Uniform random example oracle $\ex$:] A query $\ex(f)$ of the random
example oracle returns an ordered pair
        $(x, f(x))$ where $x$ is drawn uniformly random from the set
        $\{-1,1\}^n$ of all possible inputs.
\end{description}

Clearly a single call to an $\mq$ oracle can simulate the random
example oracle $\ex$. Indeed $\ex$ oracle queries are considered
``cheap'' compared to membership queries. For example, in many
settings it is possible to obtain random labeled examples but
impossible to obtained the label of a particular desired example
(consider prediction problems dealing with phenomena such as weather
or financial markets).  We note that the set of concept classes that
are known to be efficiently PAC learnable from uniform random
examples only is rather limited, see e.g. \cite{KL,MANSOUR}. In
contrast,  there are known efficient algorithms that use membership
queries to learning important function classes such as $\dnf$
(Disjunctive Normal Form) formulas \cite{JACKSON}.

\subsubsection{Quantum oracles:}
We will consider the following quantum oracles, which are the
natural quantum generalizations of membership queries and uniform
random examples respectively.
\begin{description}
    \item[Quantum membership oracle $\qmq$:]
    The quantum membership oracle $\qmq(f)$ is the quantum oracle whose query acts on
        the computational basis states as follows:
        $$\qmq(f) \colon |x,b\rangle\mapsto|x,b\cdot f(x)\rangle,\ \text{where $x \in\bits^{n}$ and $b \in \bits$}.$$
    \item[Uniform quantum examples $\qex$:] The uniform quantum example oracle $\qex(f)$ is the quantum oracle whose query acts on the
        computational basis state $|1^{n},1\rangle$ as follows:
        $$\qex(f) \colon |1^{n},1\rangle \mapsto \sum_{x\in\bits^{n}}
        \frac{1}{2^{n/2}}|x,f(x)\rangle.$$
        The action of a $\qex(f)$ query is undefined on other basis states, and
        an algorithm may only invoke the $\qex(f)$ query on the basis state $|1^n,1\rangle$.
\end{description}

It is clear that a $\qmq$ oracle can simulate a $\qex$ oracle or an
$\mq$ oracle, and a $\qex$ oracle can simulate an $\ex$ oracle.

The model of PAC learning with a uniform quantum example oracle
 was introduced by Bshouty and Jackson in
\cite{BSHJA}.  Several researchers have also studied learning from a
more powerful $\qmq(f)$ oracle, see
e.g.~\cite{AKMPY,AS05,IKRY,RSSG}. Turning to property testing, we
are not aware of prior work on quantum testing using only the
$\qex(f)$ oracle; instead researchers have considered quantum
testing algorithms that use the more powerful $\qmq(f)$ oracle, see
e.g.~\cite{BFNR,FMSS,MN}.

\subsection{Harmonic analysis of functions over $\bits^{n}$}\label{sec:bfour}
We will make use of the Fourier expansion of real valued
functions over $\bits^{n}$.
We write
$[n]$ to denote the set of variables $\{x_1, x_2, \ldots, x_n\}$.

Consider the set of real valued functions over $\bits^{n}$ endowed with
the inner product $$\innp{f}{g} = \E[f g] = {\frac 1 {2^n}}
\sum_x f(x) g(x)$$
and induced norm
$\|f \|=\sqrt{\innp{f}{f}}$. For each $S\subseteq [n]$, let
$\chi_S$ be the parity function $\chi_S(x)= \prod_{x_i\in S} x_i.$
It is a well known fact that the $2^n$
functions $\{\chi_S(x), S \subseteq [n]\}$ form
an orthonormal basis for the vector space of real valued functions over $\bits^{n}$ with the above
inner product. Consequently, every $f \colon \bits^{n} \to \mathbb{R}$ can be expressed uniquely as:
\[ f(x)=\sum_{S\subseteq[n]}  \fousumeltbl\chi_S(x)
            \]
which we refer to as the \emph{Fourier expansion} or \emph{Fourier
transform} of $f$. Alternatively, the values $\{\fousumeltbl \colon
S \subseteq [n] \} $ are called the \emph{Fourier coefficients} or
the \emph{Fourier spectrum} of $f$.

\emph{Parseval's Identity}, which is an easy consequence of
orthonormality of the basis functions, relates the values of the
coefficients to the values of the function:
\begin{lemma}[Parseval's Identity]\label{bparseval} For any $f \colon \bits^{n} \to \mathbb{R}$, we have
$ \sum_{S\subseteq[n]} |\fousumeltbl|^2= \E[f^2]$. Thus for a Boolean valued function $ \sum_{S\subseteq[n]} |\fousumeltbl|^2=1$.
\end{lemma}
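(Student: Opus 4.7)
The plan is to prove Parseval's Identity as a direct consequence of the orthonormality of the parity basis $\{\chi_S\}_{S \subseteq [n]}$, which the preceding discussion has already established. Since both sides of the claimed identity are quadratic in $f$, the natural route is to compute $\E[f^2]$ by expanding $f$ in the Fourier basis and then collapsing cross terms using orthonormality.

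Concretely, I would first note that $\E[f^2] = \innp{f}{f}$ by the very definition of the inner product on functions $\bits^n \to \mathbb{R}$ given in Section~\ref{sec:bfour}. Next, I would substitute the Fourier expansion $f = \sum_{S \subseteq [n]} \hat{f}(S)\chi_S$ on both sides of the inner product and use bilinearity to obtain
\[
\innp{f}{f} = \sum_{S \subseteq [n]} \sum_{T \subseteq [n]} \hat{f}(S)\,\hat{f}(T)\, \innp{\chi_S}{\chi_T}.
\]
The third step is to invoke orthonormality: since $\{\chi_S\}$ is an orthonormal basis, $\innp{\chi_S}{\chi_T} = 1$ when $S = T$ and $0$ otherwise, so the double sum collapses to $\sum_{S} \hat{f}(S)^2$. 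Combining these two expressions for $\innp{f}{f}$ yields the first claim.

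For the Boolean-valued case, I would finish by observing that when $f(x) \in \bits$ one has $f(x)^2 = 1$ for every $x \in \bits^n$, so $\E[f^2] = 1$ and the identity reduces to $\sum_S |\hat{f}(S)|^2 = 1$. There is essentially no obstacle here; the only subtlety worth flagging is ensuring the double sum is absolutely convergent before exchanging with the inner product, which is trivial since there are only finitely many ($2^n$) terms.
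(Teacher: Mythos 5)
Your proof is correct and follows exactly the route the paper intends: the paper states Parseval's Identity as ``an easy consequence of orthonormality of the basis functions'' without spelling out the details, and your expansion of $\innp{f}{f}$ in the $\{\chi_S\}$ basis followed by collapsing the cross terms is precisely that argument. The Boolean case via $f(x)^2=1$ is also handled correctly, so there is nothing to add.
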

We will use the following simple and well-known
fact:
\begin{fact}[See \cite{KM}]\label{kmfact} For any $\fisafunc$
and any $g: \bits^n \rightarrow \R$, we have
    \[\P_{x}[f(x)\neq \sgn(g(x))]\leq\E_{x}[{(f(x)-g(x))}^{2}]=\sum_{S\subseteq[n]}|\fousumeltbl-\hat{g}(S)|^{2}\]
\end{fact}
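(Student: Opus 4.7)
The statement has two parts: a pointwise-dominated probability bound and a Fourier-analytic identity. I would handle them separately and then chain them together.

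For the inequality $\P_x[f(x)\neq\sgn(g(x))]\leq\E_x[(f(x)-g(x))^2]$, the plan is to show that the indicator of disagreement is dominated pointwise by $(f(x)-g(x))^2$. Fix $x$ and suppose $f(x)\neq\sgn(g(x))$. Since $f(x)\in\bits$, either $f(x)=1$ and $g(x)\leq 0$, in which case $f(x)-g(x)\geq 1$, or $f(x)=-1$ and $g(x)\geq 0$, in which case $f(x)-g(x)\leq -1$. Either way $(f(x)-g(x))^2\geq 1 = \mathbf{1}[f(x)\neq\sgn(g(x))]$. When instead $f(x)=\sgn(g(x))$ the indicator is $0$ and the squared term is nonnegative, so the pointwise domination holds trivially. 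Taking expectations gives the first inequality.

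For the equality $\E_x[(f(x)-g(x))^2]=\sum_{S\subseteq[n]}|\hat{f}(S)-\hat{g}(S)|^2$, the plan is to apply Parseval's identity (Lemma~\ref{bparseval}) to the real-valued function $h:=f-g$. By linearity of the Fourier transform, $\hat{h}(S)=\hat{f}(S)-\hat{g}(S)$ for every $S\subseteq[n]$. Lemma~\ref{bparseval} then yields $\E[h^2]=\sum_S|\hat{h}(S)|^2=\sum_S|\hat{f}(S)-\hat{g}(S)|^2$, which is exactly the claimed identity. Combining the two parts completes the proof.

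There is no real obstacle here; both ingredients are elementary. The only small subtlety is being careful about the case $g(x)=0$ in the pointwise argument (since then $\sgn(g(x))$ is ambiguous, but regardless of the convention chosen, $(f(x)-g(x))^2=1\geq\mathbf{1}[\cdot]$), and making sure to invoke Parseval on the real-valued difference $f-g$ rather than on $f$ or $g$ individually.
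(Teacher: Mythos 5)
Your proof is correct and is the standard argument: pointwise domination of the disagreement indicator by $(f(x)-g(x))^2$, followed by Parseval's identity applied to the real-valued function $f-g$. The paper itself states this as a fact cited from \cite{KM} without proof, so there is nothing to compare against; your handling of the $g(x)=0$ edge case is a nice touch of care.
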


Recall that the \emph{influence} of a variable $x_i$ on a Boolean
function $f$ is the probability (taken over a uniform random input $x$ for $f$)
that $f$ changes its value when the $i$-th bit of $x$ is flipped,
i.e.
\[\Inf_i(f) = \P_x[f(x_{i} \leftarrow -1) \neq f(x_{i} \leftarrow 1)].\]
It is well known (see e.g. \cite{KKL}) that
$\Inf_i(f) =\sum_{S \ni x_i} |\hat{f}(S)|^2.$

\subsection{Additional tools}
\begin{fact}[Data Processing Inequality]\label{dpi}Let $X_1, X_2$ be two random variables over the same domain. For any (possibly randomized)
    algorithm $\mathcal{A}$, one has that $$\|\mathcal{A}(X_1)-\mathcal{A}(X_2)\|_{1} \leq \|X_1-X_2\|_{1}.$$
\end{fact}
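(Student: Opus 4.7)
The plan is to model the randomized algorithm $\mathcal{A}$ as a Markov kernel. For each possible input $x$ and each possible output $y$, define $K(y \mid x) := \P[\mathcal{A}(x) = y]$; since $\mathcal{A}$ must produce \emph{some} output on every input, $\sum_y K(y \mid x) = 1$ for each fixed $x$. Letting $p_i$ denote the probability mass function of $X_i$, the law of $\mathcal{A}(X_i)$ is the marginal $q_i(y) = \sum_x p_i(x) K(y \mid x)$. The crucial observation is that the \emph{same} kernel $K$ governs both outputs, because the internal coin tosses of $\mathcal{A}$ are independent of the choice of input distribution.

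Given this setup the argument reduces to one manipulation. First I would expand
\[
\|\mathcal{A}(X_1) - \mathcal{A}(X_2)\|_1 \;=\; \sum_y \Bigl|\sum_x \bigl(p_1(x) - p_2(x)\bigr) K(y \mid x)\Bigr|,
\]
then apply the triangle inequality inside the outer sum to move the absolute value past the sum over $x$, obtaining the upper bound $\sum_y \sum_x |p_1(x) - p_2(x)|\, K(y \mid x)$. Swapping the order of summation and invoking $\sum_y K(y \mid x) = 1$ collapses this to $\sum_x |p_1(x) - p_2(x)| = \|X_1 - X_2\|_1$, which is exactly the claimed bound.

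There is no real obstacle in this proof; it is a standard textbook consequence of the triangle inequality together with the stochasticity of $K$. The only conceptual point worth flagging is that the statement applies the \emph{same} algorithm to both random variables, which is what makes the kernel $K$ identical in the two cases and hence what licenses pulling the absolute value past $K(y \mid x)$. If one wanted to extend the statement to random variables taking values in arbitrary measurable spaces, one would replace sums by integrals and $K(y\mid x)$ by a regular conditional probability kernel, but the structure of the argument is unchanged.
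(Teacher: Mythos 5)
Your proof is correct: the Markov-kernel formulation, the triangle inequality over the inner sum, and the use of $\sum_y K(y\mid x)=1$ constitute the standard argument for the data processing inequality in total variation distance. The paper states this as a known Fact and supplies no proof of its own, so there is nothing to compare against; your argument is exactly the textbook derivation one would insert here, and your remark that the same kernel $K$ must govern both inputs (i.e., the algorithm's internal randomness is independent of which of $X_1,X_2$ is fed to it) is the one hypothesis worth making explicit.
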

Let  $S_1, S_2$ be random variables corresponding to sequences of draws
taken from two different distributions over the same domain.
By the above inequality, if $\|S_1-S_2\|_{1}$ is known to be small, then
the probability of success must be small
for any algorithm designed to distinguish if the draws are made
according to $S_1$ or $S_2$.

We will also use standard Chernoff bounds on tails of sums of
independent random variables:

\begin{fact}[Additive Bound] Let
$X_1, \ldots, X_m$ be i.i.d. random variables with mean $\mu$ taking
values in the range $[a,b]$. Then
    for all $\lambda>0$ we have $\P[|\frac{1}{m}\sum_{i=1}^{m} X_i - \mu| \geq \lambda]\leq 2 \exp(\frac{-2\lambda^2 m}{(b-a)^2})$.
\end{fact}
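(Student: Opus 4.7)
The plan is to prove this via the standard Chernoff-style exponential moment method, which yields the tight constant $2$ in the exponent. I would first handle the upper tail $\P[\frac{1}{m}\sum_i X_i - \mu \geq \lambda]$, then derive the lower tail by applying the identical argument to the random variables $-X_i$ (which are i.i.d., have mean $-\mu$, and take values in $[-b,-a]$), and finally combine the two by a union bound to account for the factor of $2$ in front.

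For the upper tail, I would apply Markov's inequality to the exponential $e^{t \sum_i (X_i - \mu)}$ for an arbitrary parameter $t > 0$, obtaining
\[ \P\Bigl[\sum_{i=1}^m (X_i - \mu) \geq m\lambda\Bigr] \leq e^{-tm\lambda}\,\E\bigl[e^{t \sum_i (X_i-\mu)}\bigr] = e^{-tm\lambda}\prod_{i=1}^m \E\bigl[e^{t(X_i-\mu)}\bigr], \]
where independence is used to split the expectation into a product. The problem thus reduces to bounding the moment generating function of a single centered bounded random variable.

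The main technical step, and the only real obstacle, is \emph{Hoeffding's lemma}: if $Y$ is mean-zero and takes values in an interval of length $b-a$, then $\E[e^{tY}] \leq e^{t^2 (b-a)^2/8}$. The standard proof uses convexity of $y \mapsto e^{ty}$ to write, for $Y \in [a',b']$,
\[ e^{tY} \leq \frac{b'-Y}{b'-a'}e^{ta'} + \frac{Y-a'}{b'-a'}e^{tb'}, \]
take expectations using $\E[Y]=0$, and then show that the logarithm of the resulting deterministic function of $t$ has second derivative bounded by $(b-a)^2/4$ via a direct calculation (setting $p = -a'/(b'-a')$ reduces it to bounding $p(1-p) \leq 1/4$). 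A Taylor expansion around $t = 0$ then yields the Gaussian-type bound.

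Plugging the lemma back gives $\P[\sum_i(X_i-\mu) \geq m\lambda] \leq \exp(-tm\lambda + mt^2(b-a)^2/8)$. Optimizing over $t > 0$ by choosing $t^\ast = 4\lambda/(b-a)^2$ produces $\exp(-2m\lambda^2/(b-a)^2)$, which matches the claimed bound. Symmetrizing for the lower tail and taking a union bound supplies the factor of $2$ and completes the proof; everything beyond Hoeffding's lemma is routine bookkeeping of the Chernoff recipe.
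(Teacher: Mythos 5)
Your proof is correct: the Chernoff/Hoeffding exponential-moment argument with Hoeffding's lemma and the optimization $t^\ast = 4\lambda/(b-a)^2$ is exactly the canonical derivation of this bound, and your symmetrization plus union bound correctly accounts for the factor of $2$. The paper itself states this as a standard fact and supplies no proof, so there is nothing to diverge from; your argument is the standard one the paper implicitly invokes.
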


\subsection{The Fourier sampling oracle: $\bj$}

\begin{definition} Let $f: \bits^n \rightarrow \bits$ be a Boolean function.
The \emph{Fourier sampling oracle} $\bj(f)$ is the classical oracle
which, at each invocation, returns each subset of variables $S
\subseteq \{1,\dots,n\}$ with probability $|\fousumeltbl|^2$, where
$\fousumeltbl$ denotes the Fourier coefficient corresponding to
$\chi_S(x)$ as defined in Section~\ref{sec:bfour}.
\end{definition}

This oracle will play an important role in our algorithms.  Note
that by Parseval's Identity we have $\sum_{S\subseteq[n]}
|\fousumeltbl|^2=1$ so the probability distribution over sets $S$
indeed has total weight 1.

In \cite{BSHJA} Bshouty and Jackson describe a simple constant-size
quantum network \texttt{QSAMP}, which has its roots in an idea from
\cite{BV97}. \texttt{QSAMP} allows sampling from the Fourier
spectrum of a Boolean function using $O(1)$ $\qex$ oracle queries:
\begin{fact}[See \cite{BSHJA}]\label{bjfact} For any Boolean function $f$,
it is possible to simulate a draw from the $\bj(f)$ oracle with
probability $1-\delta$ using $O(\log \delta^{-1})$ queries to
$\qex(f)$.
\end{fact}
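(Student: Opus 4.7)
The plan is to exhibit a one-query subroutine that succeeds with probability exactly $\tfrac12$ in producing a clean draw from $\bj(f)$, together with a classical indicator of success, and then amplify. Running the subroutine $t = O(\log\delta^{-1})$ times independently and returning the first success fails with probability at most $2^{-t} \le \delta$, for a total cost of $O(\log\delta^{-1})$ queries to $\qex(f)$.

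The subroutine is the Bernstein--Vazirani--style network of Bshouty and Jackson. First, apply $\qex(f)$ to the basis state $|1^n,1\rangle$ to obtain
\[
|\psi\rangle \;=\; \frac{1}{2^{n/2}}\sum_{x \in \bits^n} |x, f(x)\rangle,
\]
with $f(x) \in \bits$ encoded in the last qubit in the natural way ($+1 \mapsto |0\rangle$, $-1 \mapsto |1\rangle$). Then apply $H^{\otimes (n+1)}$ and measure all $n+1$ qubits in the computational basis. Using the identities $H^{\otimes n}|x\rangle = 2^{-n/2}\sum_{S\subseteq[n]} \chi_S(x)|S\rangle$ and $H|f(x)\rangle = 2^{-1/2}(|0\rangle + f(x)|1\rangle)$, a direct calculation yields
\[
H^{\otimes(n+1)}|\psi\rangle \;=\; \frac{1}{\sqrt 2}\Bigl(|\emptyset\rangle|0\rangle \;+\; \sum_{S\subseteq[n]} \hat f(S)\,|S\rangle|1\rangle\Bigr).
\]
By Parseval's identity the last qubit is measured to be $|1\rangle$ with probability exactly $\tfrac12$; conditioned on that outcome, which we declare ``success,'' the first $n$ qubits collapse to $\sum_S \hat f(S)\,|S\rangle$ and measurement returns each $S \subseteq [n]$ with probability $|\hat f(S)|^2$, which is precisely a draw from $\bj(f)$.

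The only points meriting care, rather than presenting any genuine obstacle, are the amplitude calculation above and the bookkeeping of the label-qubit encoding for $f(x) \in \bits$ so that the Hadamard acts as claimed; both are routine. One mild subtlety worth noting is that since $\qex$ is stipulated to act only on $|1^n,1\rangle$, the usual Bernstein--Vazirani preprocessing (preparing the label qubit in $|-\rangle$ before the query, which would make Fourier sampling deterministic) is unavailable here. Hadamarding the label qubit \emph{after} the query recovers Fourier sampling at the price of the $\tfrac12$ failure event, which is absorbed by the standard amplification in the first paragraph.
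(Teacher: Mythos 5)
Your proposal is correct and is precisely the construction the paper is pointing to: the stated Fact is cited to Bshouty and Jackson without proof, and the \texttt{QSAMP} network mentioned in the surrounding text is exactly your one-query Bernstein--Vazirani-style subroutine (Hadamard everything after the $\qex(f)$ query, postselect on the label qubit being $|1\rangle$, which succeeds with probability $\tfrac12$ by Parseval) followed by standard amplification. Your amplitude computation and the observation about why the label qubit cannot be pre-rotated to $|-\rangle$ are both accurate, so there is nothing to add.
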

All the algorithms we describe are actually classical algorithms
that make $\bj$ queries.

\section{Testing juntas}
\label{sec:testjuntas}

Fischer {\em et al.} \cite{FKRSS} studied the problem of testing
juntas given black-box access (i.e., classical membership query
access) to the unknown function $f$ using harmonic analysis and
probabilistic methods. They gave several different algorithms with
query complexity independent of $n$, the most efficient of which
yields the following:
\begin{theorem}[See {\cite[Theorem~6]{FKRSS}}]\label{S1Thm1}There is an algorithm
that tests whether an unknown $f: \bits^n \rightarrow \bits$ is a
$k$-junta using $O((k\log k)^2 \epsilon^{-1})$ membership queries.
\end{theorem}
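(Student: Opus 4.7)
The plan is to give a classical membership-query algorithm based on randomly partitioning the variables into many small blocks, estimating the ``influence'' of each block on $f$, and deciding based on the number of blocks that look influential. Concretely, I would pick $r = \Theta(k^2 \log k)$ and randomly partition $[n]$ into blocks $B_1,\dots,B_r$. For each block $B_i$, I would estimate the block-influence
\[
\Inf_{B_i}(f) \;:=\; \P_{x,y}\!\left[\,f(x) \neq f\!\left(x^{(B_i \leftarrow y)}\right)\,\right],
\]
where $x^{(B_i \leftarrow y)}$ agrees with $x$ outside $B_i$ and with $y$ inside $B_i$. Each such estimate can be produced to additive accuracy $\tau = \Theta(\eps/k)$ with confidence $1-\delta$ using $O(\log(1/\delta)/\tau^2)$ membership queries by an additive Chernoff bound; setting $\delta = 1/(10r)$ and union bounding over the $r$ blocks lets us treat every estimate as accurate. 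The tester declares a block \emph{relevant} if its estimated influence exceeds $\tau/2$, and accepts iff at most $k$ blocks are relevant.

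For completeness, suppose $f$ is a genuine $k$-junta with relevant variable set $J$, $|J| \leq k$. With $r = \Omega(k^2 \log k)$ blocks, a birthday-style union bound shows that with probability at least $9/10$ the $k$ elements of $J$ fall into $k$ distinct blocks, so at most $k$ blocks have nonzero influence. The accuracy of our estimates then ensures we call at most $k$ blocks relevant, so we accept.

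For soundness, the key lemma is the Fourier characterization of closeness to a junta: $f$ is $\eps$-close to a $k$-junta iff there is a set $J \subseteq [n]$ with $|J|\leq k$ such that $\sum_{S \not\subseteq J} |\hat f(S)|^2 \leq O(\eps)$. Using this, if $f$ is $\eps$-far from every $k$-junta then for every union $U$ of at most $k$ of our blocks we must have $\sum_{S : S \not\subseteq U} |\hat f(S)|^2 = \Omega(\eps)$. Since block-influence can be written (up to a constant factor) as $\Inf_{B_i}(f) = \sum_{S : S \cap B_i \neq \emptyset} |\hat f(S)|^2$, a pigeonhole/averaging argument shows that at least $k+1$ blocks must each carry influence $\Omega(\eps/k)$, which our estimates will detect; hence we reject.

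The main obstacle is calibrating the parameters so the total cost lands at $O((k\log k)^2 \eps^{-1})$ rather than the naive $O(k^4/\eps^2)$ one gets from multiplying ``$r = O(k^2)$ blocks'' by ``$O(k^2/\eps^2)$ samples per block.'' The savings come from two observations: the block-influence estimation only needs to distinguish ``essentially zero'' from ``at least $\Omega(\eps/k)$,'' and one can share samples across blocks by reusing a single random string $x$ with many independently resampled blocks (so that all $r$ block-influence tests draw from a common pool of labeled examples). Carrying out this amortized sample-reuse analysis carefully, together with the $\log k$ factor needed for the union bound over blocks, is where the precise $(k\log k)^2 \eps^{-1}$ bound comes from, and this is the delicate part I would need to work through.
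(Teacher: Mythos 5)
First, note that the paper does not prove this statement at all: Theorem~\ref{S1Thm1} is quoted from Fischer \emph{et al.}\ \cite{FKRSS} (their Theorem~6) purely as background, so there is no in-paper proof to compare against. Your proposal is a reconstruction of the \cite{FKRSS} argument, and its high-level architecture is indeed the right one: randomly partition the coordinates into $\mathrm{poly}(k)$ blocks, detect which blocks carry noticeable ``variation'' $\sum_{S\cap B_i\neq\emptyset}\hat f(S)^2$ via the two-point independence test, and accept iff at most $k$ blocks look influential. (As an aside, completeness needs no birthday argument: a $k$-junta's relevant variables occupy at most $k$ blocks under \emph{any} partition, and a block containing no relevant variable fails the independence test with probability $0$.)

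There are, however, two genuine gaps. The soundness step rests on the claim that if $f$ is $\eps$-far from every $k$-junta, then for every union $U$ of at most $k$ blocks one has $\sum_{S\not\subseteq U}\hat f(S)^2=\Omega(\eps)$. As stated this is false: $U$ is a union of $k$ blocks and hence contains on the order of $kn/r$ variables, so small Fourier mass outside $U$ only makes $f$ close to a $|U|$-junta, not a $k$-junta --- concretely, a $(k+1)$-junta that is far from every $k$-junta but whose relevant variables all land in one block defeats the claim for that partition. Proving that, with high probability over the random partition, farness from $k$-juntas forces at least $k+1$ blocks to each carry variation $\Omega(\eps/k)$ is the technical heart of \cite{FKRSS}; it requires their ``unique variation'' device to restore additivity (block variation is only subadditive, so the pigeonhole/averaging you invoke does not go through) together with an argument about how the mass outside the $k$ heaviest coordinates spreads across blocks. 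The second gap is the query count: estimating each block's influence to additive accuracy $\tau=\Theta(\eps/k)$ costs $\Theta(\tau^{-2}\log r)=\Theta(k^2\eps^{-2}\log k)$ queries per block, and sharing samples across blocks cannot repair the $\eps^{-2}$ dependence, which must become $\eps^{-1}$. The fix is not amortization but replacing two-sided estimation by one-sided detection: a block with variation at least $\tau$ fails a single independence test with probability $\Omega(\tau)$, so $O(\tau^{-1}\log r)$ repetitions suffice to catch it while blocks of zero variation never fail, and it is this that yields $O((k\log k)^2\eps^{-1})$. As written, your parameters give roughly $O(k^4\eps^{-2})$ up to logarithmic factors.
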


Fischer \emph{et al.} also gave a lower bound on the number of queries
required for testing juntas, which was subsequently improved by
Chockler {\em et al.} to the following:
\begin{theorem}[See \cite{CG04}]\label{S1Thm2}
Any algorithm that tests whether $f$ is a $k$-junta or is $1/3$-far
from every $k$-junta must use $\Omega(k)$ membership queries.
\end{theorem}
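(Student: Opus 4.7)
The plan is to prove this classical lower bound via Yao's minimax principle: I will exhibit a pair of distributions, $D_{\text{yes}}$ supported on $k$-juntas and $D_{\text{no}}$ supported on functions that are $1/3$-far from every $k$-junta, and argue that no deterministic algorithm making $q=o(k)$ membership queries can distinguish $f \sim D_{\text{yes}}$ from $f \sim D_{\text{no}}$ with advantage better than $1/6$. Combined with Fact~\ref{dpi} (the data processing inequality), this will rule out randomized testers making $o(k)$ queries.

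For the construction, I would first choose a uniformly random set $T \subseteq [n]$ of size $2k$ (or some constant multiple of $k$), which will play the role of ``relevant'' coordinates. Under $D_{\text{yes}}$, I further pick a uniformly random $k$-subset $J \subset T$ and let $f$ be a uniformly random Boolean function depending only on $J$; this is manifestly a $k$-junta. Under $D_{\text{no}}$, I let $f$ be a uniformly random Boolean function on the coordinates in $T$. Crucially, a standard counting/concentration argument shows that a uniformly random function on $2k$ variables is $1/3$-far from every $k$-junta except with exponentially small probability: there are at most $\binom{2k}{k} 2^{2^k}$ many $k$-juntas on $T$, but almost all functions on $\bits^T$ are far from every one of them by a union bound on random functions.

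The heart of the proof is arguing that $q = o(k)$ queries cannot distinguish $D_{\text{yes}}$ from $D_{\text{no}}$. The key observation is that in both distributions, the only ``useful'' information in a query $x^{(i)}$ is the projection $x^{(i)}|_T$ (and, under $D_{\text{yes}}$, further the projection onto $J$). Since $T$ is a uniformly random set of size $2k$ inside $[n]$ and the queries $x^{(1)},\dots,x^{(q)}$ are chosen obliviously (by the principle of deferred decisions, we may imagine $T$ chosen after the queries), the joint distribution of the projections $(x^{(i)}|_T)_{i \le q}$ concentrates on distinct strings provided $q \ll 2^{k}$. Conditioned on distinct projections, the responses under both $D_{\text{yes}}$ and $D_{\text{no}}$ are simply independent uniform $\bits$ bits --- because a uniformly random function on $J$ (resp.\ $T$) evaluated at distinct inputs gives i.i.d.\ fair coins. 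Thus the transcript of any adaptive algorithm has the same distribution (up to $o(1)$ total variation) under the two distributions, establishing the $\Omega(k)$ bound via Fact~\ref{dpi}.

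The main obstacle is the ``distinct projections'' step: we must handle \emph{adaptive} queries, so we cannot simply fix the queries in advance. The standard way to handle this is a hybrid/online argument where, at each step $i$, we reveal the value $f(x^{(i)})$ as a fresh random bit unless $x^{(i)}|_J$ (resp.\ $x^{(i)}|_T$) has already appeared in a previous query; one then bounds the probability of any such ``collision'' over the random choice of $J$ and $T$, and shows it is $o(1)$ for $q = o(k)$. Getting the parameters to yield exactly $\Omega(k)$ (rather than a weaker bound) requires careful bookkeeping about how many coordinates of $T$ are ``probed'' by the queries, which is where the Chockler--Gutfreund analysis puts in its work.
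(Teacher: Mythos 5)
This theorem is quoted from Chockler and Gutfreund \cite{CG04}; the paper does not reprove it, but it does describe their construction (Scenarios I and II in Section~3.2.1): a uniformly random function on the $k+1$ variables $x_1,\dots,x_{k+1}$ versus a uniformly random function on those same variables with one uniformly chosen variable $x_i$, $i\in\{1,\dots,k+1\}$, forced to be irrelevant. Your construction is different, and unfortunately your indistinguishability argument has a genuine gap.

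The flaw is the ``distinct projections'' step, which you correctly identify as the main obstacle but then dismiss too quickly. The claim that the projections $(x^{(i)}|_T)_{i\le q}$ are distinct with probability $1-o(1)$, or that collisions occur with probability $o(1)$ for $q=o(k)$, is false: the queries are chosen by the algorithm, not at random, so birthday-paradox reasoning does not apply. A tester can query $x$ and then $x$ with a single coordinate $j$ flipped; the projections onto $T$ then collide whenever $j\notin T$, which happens with probability $1-2k/n\approx 1$. Worse, these engineered collisions are precisely the distinguishing signal rather than a negligible error event: a pair of queries whose projections agree on $J$ but not on $T$ is forced to receive equal answers under $D_{\mathrm{yes}}$ but receives independent answers under $D_{\mathrm{no}}$, so once collisions occur the two transcript distributions are genuinely different and your hybrid argument cannot be repaired by a union bound. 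This is exactly why the Chockler--Gutfreund proof takes a different shape: they make the candidate relevant set known to the algorithm (the first $k+1$ variables), so that the only thing the tester must discover is the identity of the single hidden irrelevant variable among $k+1$ candidates, and the combinatorial work goes into showing that each pair of queries differing in one candidate coordinate ``tests'' essentially one candidate, so that $\Omega(k)$ queries are needed to cover a constant fraction of them. (A secondary, more minor issue: your union bound showing $D_{\mathrm{no}}$ is far from every $k$-junta must range over all $\binom{n}{k}2^{2^k}$ juntas, not only those supported on $T$, which requires an additional argument when $n$ is very large compared to $k$.)
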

We emphasize that that both of these results concern algorithms
with classical membership query access.

\subsection{A testing algorithm using
$O(k/\eps)$ $\bj$ oracle calls}
In this section we describe a new testing algorithm that uses the
$\bj$ oracle
and prove the following theorem about its performance:
\begin{theorem}\label{S1QThm1}There is an algorithm that tests the
property of being a $k$-junta using $O(k/\epsilon)$
calls to the $\bj$ oracle.
\end{theorem}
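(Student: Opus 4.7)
The plan is to run a straightforward ``collect relevant variables'' procedure driven by the Fourier sampling oracle. The algorithm initializes $J \gets \emptyset$, then for $i = 1,\dots,m$ (with $m$ a suitable constant multiple of $k/\eps$) calls $\bj(f)$ to obtain a set $S_i$ and updates $J \gets J \cup S_i$; it rejects as soon as $|J| > k$, and otherwise accepts after $m$ rounds. Completeness is immediate: if $f$ is a $k$-junta whose relevant variables lie in some $J^\star \subseteq [n]$ with $|J^\star| \le k$, then $\fousumeltbl = 0$ for every $S \not\subseteq J^\star$, so every draw satisfies $S_i \subseteq J^\star$ and hence $J \subseteq J^\star$ throughout, guaranteeing $|J| \le k$.

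The heart of the argument is to show, under the hypothesis that $f$ is $\eps$-far from every $k$-junta, that whenever the current $J$ satisfies $|J| \le k$ a fresh $\bj(f)$ draw lies outside $J$ with probability at least $\eps$. To do this I would introduce the Fourier truncation $f_J(x) = \sum_{S \subseteq J} \fousumeltbl \chi_S(x)$ and observe that $\sgn(f_J)$ is a Boolean function depending only on variables in $J$, so it is itself a $k$-junta. The farness assumption together with Fact~\ref{kmfact} then yields
$$\eps \;\le\; \P_x[f(x) \ne \sgn(f_J(x))] \;\le\; \E_x[(f(x) - f_J(x))^2] \;=\; \sum_{S \not\subseteq J} |\fousumeltbl|^2,$$
and the right-hand side is exactly the probability that a $\bj(f)$ sample is not contained in $J$.

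To finish soundness I would couple the sequence of samples with i.i.d.\ Bernoulli$(\eps)$ random variables $B_1,\ldots,B_m$ so that, conditional on $|J| \le k$ after step $i-1$, the event that $S_i$ enlarges $J$ dominates $\{B_i = 1\}$; such a coupling exists by the Fourier bound of the previous paragraph. Because each enlargement increases $|J|$ by at least one, the algorithm is certain to reject whenever $\sum_{i \le m} B_i \ge k+1$, and a standard Chernoff tail bound makes this event occur with probability at least $2/3$ once $m$ is a large enough constant multiple of $k/\eps$. I expect the Fourier/farness step to carry the real content of the proof, while the stochastic-dominance and Chernoff bookkeeping is routine.
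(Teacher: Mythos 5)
Your proposal is correct and follows essentially the same route as the paper: the same union-of-$\bj$-samples algorithm, and the same key soundness step showing via Fact~\ref{kmfact} that $\eps$-farness forces $\sum_{S \not\subseteq J} |\fousumeltbl|^2 \geq \eps$ whenever $|J| \leq k$. The only (routine) difference is the final concentration argument: you couple with Bernoulli$(\eps)$ variables and apply a (multiplicative) Chernoff bound, whereas the paper bounds the expected coupon-collector waiting times $X_1 + \cdots + X_{k+1}$ and applies Markov's inequality; both yield $O(k/\eps)$ queries.
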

As described in Section~\ref{sec:jprelim}, the algorithm can thus
be implemented using $O(k/\eps)$ uniform quantum examples from $\qex(f)$.

\begin{proof}
Consider the following algorithm $\alg$  which has $\bj$ oracle access
to an unknown function $\fisafunc$.  Algorithm $\alg$ first makes
$10(k+1)/\eps$ calls to the $\bj$ oracle; let $\mathcal{S}$ denote the union of all the sets of variables received as responses to
these oracle calls.  Algorithm $\alg$ then outputs
``\textsc{Accept}'' if $|\mathcal{S}| \leq k$ and outputs ``\textsc{Reject}''
if $|\mathcal{S}|>k$.

It is clear that if $f$ is a $k$-junta then $\alg$ outputs ``\textsc{Accept}''
with probability 1. To prove correctness of the test
it suffices to show that if $f$ is
$\eps$-far from any $k$-junta then $\P[\alg$ outputs ``\textsc{Reject}''$] \geq {\frac 2 3}.$

The argument is similar to the standard analysis of the
coupon collector's problem.
Let us view the set $\mathcal{S}$ as growing incrementally step by step as successive calls to the $\bj$ oracle are performed.

Let $X_i$ be a random variable which denotes the number of $\bj$
queries that take place starting immediately after the $(i-1)$-st
new variable is added to $\mathcal{S}$, up through the draw when the
$i$-th new variable is added to $\mathcal{S}$.  If the $(i-1)$-st
and $i$-th new variables are obtained in the same draw then $X_i=0$.
(For example, if the first three queries to the $\bj$ oracle are
$\{1,2,4\},$ $\{2,4\}$, $\{1,4,5,6\}$, then we would have $X_1=1$,
$X_2 = 0$, $X_3=0$, $X_4 = 2$, $X_5=0$.)

Since $f$ is $\eps$-far from any $k$-junta, we know that for any set $\mathcal{T}$ of $k' \leq k$ variables, it must be the case that
\[
\sum_{S \subseteq \mathcal{T}} \hat{f}(S)^2 \leq 1 - \eps
\]
(since otherwise if we set $g=\sum_{S \subseteq \mathcal{T}} \hat{f}(S)
\chi_S, h=\sgn(g)$ and use Fact~\ref{kmfact}, we would have
\[
\P_{x}[f(x)\neq h(x)]\leq\E_{x}[{(f(x)-g(x))}^{2}]
= \sum_{S \not\subseteq \mathcal{T}} \hat{f}(S)^2 < \epsilon
\]
which contradicts the fact that $f$ is $\eps$-far from any $k$-junta).
It follows that for each $1 \leq i \leq k$, if at the current stage
of the construction of $\mathcal{S}$ we have $|\mathcal{S}|=i$, then the
probability that the
next $\bj$ query yields a new variable outside of $\mathcal{S}$ is at least
$\eps$.
Consequently we have $\E[X_i] \leq {\frac 1 \eps}$ for each $1 \leq i
\leq k+1$, and hence
\[
\E[X_1 + \cdots + X_{k+1}] \leq {\frac {(k+1)} \eps}.
\]
By Markov's inequality, the probability that $X_1 + \cdots + X_{k+1} \leq
10(k+1)/\eps$ is at least $9/10$, and therefore with probability at least
$9/10$ it will be the case after $10(k+1)/\eps$ draws that $|\mathcal{S}|>k$
and the algorithm will consequently output ``\textsc{Reject}.''
\end{proof}

Note that the
$O(k/\eps)$ uniform quantum examples required for Algorithm $\alg$ improves
on the $O((k \log k)^{2}/\eps)$ query complexity of the best
known classical algorithm.  However our result
does not conclusively show that $\qex$
queries are more powerful than classical membership queries for this problem
since it is conceivable that there could exist an as
yet undiscovered $O(k/\eps)$ classical membership query algorithm.

\subsection{Lower bounds for testing with a $\bj$ oracle}

\subsubsection{A first approach}

As a first attempt to obtain a lower bound on the number of
$\bj$ oracle calls required to test $k$-juntas, it is natural to consider
the approach of Chockler \emph{et al.} from \cite{CG04}.
To prove Theorem~\ref{S1Thm2}, Chockler \emph{et al.}
show that any classical algorithm which can successfully distinguish between
the following two probability distributions over black-box functions
must use $\Omega(k)$ queries:
\begin{itemize}
\item {\bf Scenario I:} The distribution ${\dist}^{(0)}_{k,n}$ is uniform
over the set of all Boolean functions over $n$ variables
which do not depend on variables $k+2,\dots,n.$
\item {\bf Scenario II:}
The distribution $\dist^{(1)}_{k,n}$ is defined as follows:
to draw a function $f$ from this distribution, first
an index $i$ is chosen uniformly from $1,\ldots,k+1$, and then $f$
is chosen uniformly from among those functions that do not depend on
variables $k+2,\dots,n$ or on variable $i$.
\end{itemize}

The following observation shows that this approach will not
yield a strong lower bound for algorithms that have
access to a $\bj$ oracle:
\begin{observation} With $O(\log k)$ queries to a
$\bj$ oracle, it is possible to determine w.h.p. whether
a function $f$ is drawn from Scenario I or Scenario II.
\end{observation}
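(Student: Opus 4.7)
The plan is to use the union of a few $\bj$ samples to count how many distinct variables appear in the responses, and distinguish Scenarios I and II based on whether this count is $k+1$ or $k$.

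First, observe that in both scenarios $f$ is effectively a uniformly random Boolean function on a set $R \subseteq \{1,\dots,k+1\}$ of ``relevant'' variables: $|R|=k+1$ in Scenario I, while in Scenario II $|R|=k$ (the omitted index is the random $i$). Since $f$ is constant in the variables outside $R$, every set returned by $\bj(f)$ is a subset of $R$. So the union $\mathcal{S}$ of any number of $\bj$ samples always satisfies $\mathcal{S} \subseteq R$, and in Scenario II we have $|\mathcal{S}| \le k$ with probability $1$.

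Next I would show that with high probability over the draw of $f$, every relevant variable $j \in R$ has $\Inf_j(f) \ge 1/4$. For a uniformly random Boolean function on $m$ variables, each of the $2^{m-1}$ neighboring pairs along coordinate $j$ contributes independently to $\Inf_j(f)$ with probability $1/2$, so $\Inf_j(f)$ is distributed as $\mathrm{Binomial}(2^{m-1},1/2)/2^{m-1}$, tightly concentrated around $1/2$. An additive Chernoff bound together with a union bound over the at most $k+1$ variables in $R$ gives this event with probability at least $1 - (k+1)\exp(-\Omega(2^{k-1}))$, essentially $1$ for $k$ not tiny (the cases of very small $k$ are trivial since they can be decided by brute force in $O(1)$ queries).

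The algorithm then makes $t = C\log k$ calls to $\bj(f)$, lets $\mathcal{S}$ be the union of the returned sets, and outputs ``Scenario I'' iff $|\mathcal{S}| = k+1$. Recalling that each $\bj$ sample contains $j$ with probability exactly $\Inf_j(f)$, the chance that $j$ is missed by all $t$ samples is at most $(3/4)^t$. Choosing $C$ large enough and union-bounding over the at most $k+1$ candidate variables in $R$, every variable in $R$ is observed with probability at least $9/10$. In Scenario I this gives $|\mathcal{S}|=k+1$ w.h.p., whereas in Scenario II we always have $|\mathcal{S}| \le |R| = k$; the two scenarios are therefore distinguished with constant advantage.

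The only genuine obstacle is the influence concentration step; this is routine once one notes that $\Inf_j(f)$ for a uniformly random Boolean function is literally a normalized sum of i.i.d.\ Bernoulli$(1/2)$ indicators, so the Chernoff bound we already cited applies directly.
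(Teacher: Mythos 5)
Your proof is correct and follows essentially the same route as the paper: in Scenario I each of the $k{+}1$ relevant variables of the random function has influence bounded below by a constant with overwhelming probability, so $O(\log k)$ $\bj$ queries expose all $k{+}1$ of them, whereas in Scenario II the union of exposed variables can never exceed $k$. The only difference is cosmetic: you establish the influence concentration directly (observing that $2^{m-1}\Inf_j(f)$ is Binomial$(2^{m-1},1/2)$ and applying the additive Chernoff bound), whereas the paper cites Proposition~6 of the O'Donnell--Servedio paper for the same concentration statement; both arguments implicitly require $k$ to be sufficiently large, which is all the observation needs.
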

\begin{proof}
It is easy to see that a function drawn from Scenario I is simply a
random function on the first $k+1$ variables.
The Fourier spectrum of random Boolean functions is studied
in \cite{OS03}, where it is shown that sums of squares of Fourier
coefficients of random Boolean functions
are tightly concentrated around their expected value.
In particular, Proposition~6 of \cite{OS03} directly implies that
for any fixed variable $x_i, i\in 1,\ldots,k+1,$ we have:
\[\P_{f\leftarrow {\dist}^{(0)}_{k,n}}\left[\sum_{S \ni x_i}
\hat{f}(S)^{2} > \frac{1}{3}\right] <\exp(-2^{k+1}/2592).
\]
Thus with overwhelmingly high probability, if $f$ is
drawn from Scenario I then each $\bj$ query will ``expose''
variable $i$ with probability at least $1/3$.   It
follows that after $O(\log k)$ queries all $k+1$ variables will have
been exposed; so by making $O(\log k)$ $\bj$ queries and simply
checking whether or not $k+1$ variables have been exposed, one
can determine w.h.p. whether $f$ is drawn from Scenario I or Scenario II.
\end{proof}
Thus we must adopt a more sophisticated approach to prove
a strong lower bound on $\bj$ oracle algorithms.

\subsubsection{An $\Omega(\sqrt{k})$ lower bound for
$\bj$ oracle algorithms}

Our main result in this section is the following theorem:
\begin{theorem}\label{S1QThm2}
Any algorithm that has $\bj$ oracle access to an unknown $f$ must use
$\Omega(\sqrt{k})$ oracle calls to test whether $f$ is a $k$-junta.
\end{theorem}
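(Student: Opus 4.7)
The plan is to apply Yao's minimax principle: exhibit two distributions $\dist_0$ supported on $k$-juntas and $\dist_1$ supported on Boolean functions $\Omega(1)$-far from every $k$-junta, and show that for every deterministic algorithm making $q \le c\sqrt{k}$ $\bj$ queries the total variation distance between the transcript distributions under $f \sim \dist_0$ and $f \sim \dist_1$ is at most $1/3$. Since $\bj$ queries are conditionally independent given $f$, the transcript law is a mixture
\[
\P_{\dist}[(S_1,\ldots,S_q)] \;=\; \E_{f \sim \dist}\Bigl[\prod_{i=1}^q |\hat{f}(S_i)|^2\Bigr],
\]
and it is the $L_1$ distance between these two mixtures over $q$-tuples of subsets that must be controlled.

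The observation preceding the theorem shows that the Chockler--Gutfreund pair of scenarios is already distinguished in $O(\log k)$ $\bj$ queries by coupon collection, so any $\Omega(\sqrt{k})$ construction must force the single-query marginals of $\dist_0$ and $\dist_1$ to nearly match, leaving the algorithm to detect only higher-order correlations. The natural source of a $\sqrt{k}$ threshold is then a birthday/collision argument: if each $\bj$ query effectively returns a uniformly random element of a hidden set $V$ of size proportional to $k$, then distinguishing $|V|=k$ from $|V|=2k$ requires $\Omega(\sqrt{k})$ samples just to see a difference in pairwise collision rates. A natural candidate is $f = \mathrm{Maj}_V$ with $V$ a uniformly random subset of $[n]$, of size $k$ in $\dist_0$ and size $2k+1$ in $\dist_1$; the latter is $\Omega(1)$-far from every $k$-junta because, for any $V_g \subseteq V$ with $|V_g|=k$, the best $k$-junta approximation $\mathrm{Maj}_{V_g}$ disagrees with $\mathrm{Maj}_V$ on a $1/4$ fraction of inputs by the standard arc-sin formula for $\P[\sgn(X)=\sgn(X+Y)]$ with independent Gaussian $X,Y$, and juntas on variables outside $V$ do strictly worse. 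Crucially, after averaging over the random placement of $V$ the singleton component of a single $\bj$ query is $\mathrm{Unif}([n])$ under both ensembles, so the two distributions agree on the most informative part of a single query.

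The main obstacle I expect is that the non-singleton Fourier mass of majority also carries information about $|V|$, most visibly through $\E[|S|] \sim \sqrt{|V|}$, which in principle separates $\dist_0$ and $\dist_1$ in $O(1)$ queries via the sample mean of $|S_i|$. Overcoming this will likely require either replacing majority with a symmetric function whose first few $|S|$-moments are matched between the two ensembles, or overlaying an explicit ``noise'' factor on $f$ that equalizes the entire single-query distribution while preserving $\Omega(1)$-farness from $k$-juntas. Once the single-query marginals are equalized, I would bound the $\chi^2$ divergence of the $q$-query transcript distribution by expanding the cross-terms pair-by-pair and showing that the leading nontrivial contribution scales as $q^2/k$; by Pinsker's inequality this yields $\mathrm{TV} = O(q/\sqrt{k})$, forcing $q = \Omega(\sqrt{k})$ for any distinguisher with constant advantage and completing the lower bound.
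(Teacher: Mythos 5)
Your high-level strategy is the right one and matches the paper's: Yao's principle with a Reject ensemble far from $k$-juntas and an Accept ensemble of $k$-juntas, single-query marginals that coincide after averaging over a random embedding of the relevant variables into $[n]$, and a birthday-paradox argument showing that the only distinguishing statistic (collisions among the returned variables) does not appear until $\Theta(\sqrt{k})$ queries. However, the proposal has a genuine gap: you never actually produce a pair of ensembles with matched single-query distributions. Your concrete candidate, $\mathrm{Maj}_V$ with $|V|=k$ versus $|V|=2k+1$, fails for exactly the reason you yourself flag: the law of $|S|$ under a single $\bj$ query already separates the two scenarios (e.g.\ $\P[\,|S|>k\,]$ is zero in one ensemble and $\Theta(k^{-1/2})$ in the other, and the level-weight profile $W_j \sim j^{-3/2}$ truncates at $|V|$), so the heart of the lower bound --- exhibiting functions for which a single query reveals nothing --- is left as an unproven ``likely requires either\ldots or\ldots''. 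The subsequent $\chi^2$/Pinsker computation is only meaningful once that construction exists, so as written the proof does not go through.

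For comparison, the paper resolves precisely this difficulty with the \textsc{Addressing} function on $r=\log_2 R$ address bits and $R$ addressee leaves ($k=r+R/2$). In the Reject ensemble the $R$ leaves carry distinct random variables $y_{\tau(\mathbf{i})}$; in the Accept ensemble only $R/2$ variables are used, each placed at two antipodal leaves with a random sign $s_{\mathbf{i}}$. Every nonzero Fourier coefficient then has the form $(y_j, X)$ with $X\subseteq\{x_1,\dots,x_r\}$, all nonzero coefficients have equal magnitude, and the sign trick makes the conditional law of $X$ given a \emph{fresh} $y_j$ exactly uniform in both ensembles (the parity constraint on $X$ is randomized by $s_{\mathbf{i}}$). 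Hence, conditioned on no $y_j$ repeating --- which holds with probability $1-o(1)$ for $N=o(\sqrt{R})=o(\sqrt{k})$ queries --- the two transcript distributions are \emph{identical}, and Fact~\ref{dpi} finishes the argument with no moment or divergence calculations. If you want to salvage your route, you would need an analogous device (your ``noise overlay'') that forces every single-query observable, including $|S|$, to have the same law under both ensembles; absent that, the claim is not established.
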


\begin{proof}
Let $k$ be such that $k=r+2^{r-1}$ for some positive integer $r.$ We
let $R$ denote $2^r.$ The \emph{addressing function} on $r+R$
variables has $r$ ``addressing variables,'' which we shall denote
$x_1,\dots,x_r,$ and $R=2^r$ ``addressee variables'' which we denote
$z_0,\dots,z_{R-1}.$ The output of the function is the value of
variable $z_{\mathbf{x}}$ where the ``address'' ${\mathbf{x}}$ is
the element of $\{0,\dots,R-1\}$ whose binary representation is
given by $x_1\ldots x_r$. Figure~1 depicts a decision tree that
computes the addressing function in the case $r=3$. Formally, the
Addressing function $\textsc{Addressing}: \{-1,1\}^{r+R} \rightarrow
\{-1,1\}$ is defined as follows:
\begin{align*}
&\text{\textsc{Addressing}}(x_1, x_2, \ldots, x_r, z_{0}, z_{1}, \ldots,
z_{R-1})=z_{\mathbf{x}}, \\
&\text{where}\
\mathbf{x}=(\frac{1-x_{1}}{2})\circ
(\frac{1-x_{2}}{2})\circ\ldots \circ(\frac{1-x_{r}}{2})\ \text{in
binary form and $\circ$ is binary concatenation}.
\end{align*}

\begin{figure}[t]
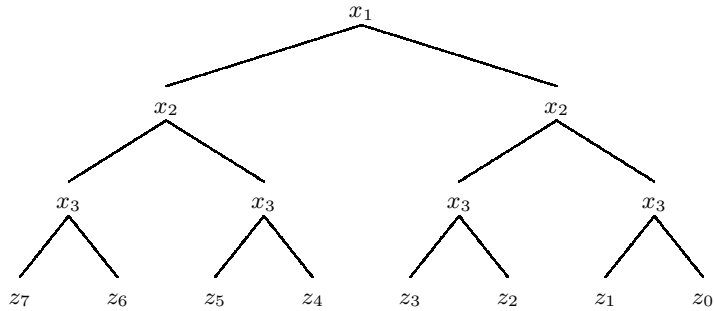

\begin{center}
\label{fig:tree}
\childsidesep{1cm}
\childattachsep{1cm}
\synttree{3}[$x_1$[$x_2$ [$x_3$ [$z_7$] [$z_6$]] [$x_3$ [$z_5$] [$z_4$]]] [$x_2$ [$x_3$ [$z_3$] [$z_2$]] [$x_3$ [$z_1$ ] [$z_0$]]]]
 \caption{A decision tree computing the addressing function in
the case $r=3$.  The left edge out of each node corresponds to the
variable at the node taking value $-1$ and the right edge to the
variable taking value 1.}
\end{center}
\end{figure}

Intuitively, the Addressing function will be useful for us because
as we will see the Fourier spectrum is ``spread out'' over the $R$
addressee variables; this will make it difficult to distinguish the
Addressing function (which is not a $k$-junta since $k=r+R/2$ and as
we shall see is in fact far from every $k$-junta) from a variant
which is a $k$-junta.

Let $x_1,\dots,x_r,y_0,\dots,y_{n-r-1}$ be the $n$
variables that our Boolean functions are defined over.
We now define two distributions
$\dist_{\textsc{Reject}}$,
$\dist_{\textsc{Accept}}$
over functions on these variables.

 The distribution $\dist_{\textsc{Reject}}$ is
defined as follows:  to make a draw from $\dist_{\textsc{Reject}}$,

\begin{enumerate}
\item First uniformly choose a subset $T$ of $R$ variables from
$\{y_0,\dots,y_{n-r-1}\}$;
\item Next, replace the variables
$z_{0},\dots,z_{R-1}$ in the function
\[\textsc{Addressing}(x_1, \ldots, x_r, z_{0}, \ldots,
z_{R-1})\] with the variables in $T$ (choosing the variables from $T$
in a uniformly random order).  Return the resulting function.
\end{enumerate}
Note that step (2) in the description of
making a draw from $\dist_{\textsc{Reject}}$ above corresponds to
placing the variables in $T$ uniformly at the leaves of the decision
tree for $\textsc{Addressing}$ (see Figure~1).

Equivalently, if we write $f_\tau$ to denote the following function \emph{over $n$ variables}
\begin{equation}\label{eqn:ftau1}
f_{\tau}(x_1,\ldots, x_r, y_{0}, \ldots, y_{n-r-1})=
\textsc{Addressing}(x_1, x_2, \ldots, x_r, y_{\tau(0)},
y_{\tau(1)}, \ldots, y_{\tau(R-1)});
\end{equation}
a draw from
$\dist_{\textsc{Reject}}$ is a function chosen uniformly at random
from the set
$C_{\textsc{Reject}}= \{f_{\tau}\}
$
where $\tau$ ranges over all permutations of
$\{0,\dots,n-r-1\}.$

It is clear that every function in $C_{\textsc{Reject}}$ (the support of $\dist_{\textsc{Reject}}$)
depends on $r+R$ variables and thus is not a $k$-junta.  In fact,
every function in $C_{\textsc{Reject}}$ is far from being a $k$-junta:

\begin{lemma}
Every $f$ that has nonzero probability under $\dist_{\textsc{Reject}}$ is
$1/6$-far from any $k$-junta.
\end{lemma}
\begin{proof} Fix any such $f$ and let $g$ be any $k$-junta.  It is clear
that at least $R/2 - r$ of the ``addressee'' variables of $f$ are not
relevant variables for $g$.
For a ${\frac {R/2 - r}{R}} > 1/3$ fraction of all inputs to $f$,
the value of $f$ is determined by one of these addressee variables;
on such inputs the error rate of $g$ relative to $f$ will be precisely
$1/2.$
\end{proof}

Fix any function $f_\tau$ in $C_{\textsc{Reject}}$.  We now give an
expression for the Fourier representation of $f_\tau$.  The
expression is obtained by viewing $f_\tau$ as a sum of $R$
subfunctions, one for each leaf of the decision tree, where each
subfunction takes the appropriate nonzero value on inputs which
reach the corresponding leaf and takes value 0 on all other
inputs:
%the contribution to the Fourier spectrum from each leaf of
%the decision tree, we obtain the following expression for the
%Fourier representation of each $f_\tau$ in $C_{\textsc{Reject}}$:
\small
\begin{align}\label{eqn:ftau2}
    f_{\tau}(x_1,\ldots, x_r, y_{0}, \ldots, y_{n-r-1})& = \sum_{\mathbf{i}=i_{1}i_{2}\ldots i_{r}=0}^{R-1} y_{\tau(\mathbf{i})}
(\frac{1 + (-1)^{i_{1}}x_{1}}{2})(\frac{1 + (-1)^{i_{2}}x_{2}}{2})\ldots(\frac{1 + (-1)^{i_{r}}x_{r}}{2})
\end{align}
\begin{align}
\quad & = \frac{1}{2^{r}}\sum_{\mathbf{i}=0}^{R-1}
\sum_{X\subseteq\{x_1,\ldots,x_r\}}(-1)^{(\sum_{x_{j}\in X} i_{j})}
y_{\tau(\mathbf{i})} \chi_{X}.\label{eqn:ftau2l2}
\end{align}
\normalsize Note that whenever $\frac{1-x_1}{2}=i_1,
\frac{1-x_2}{2}=i_2, \ldots, \frac{1-x_r}{2}=i_r$, the sum on the
RHS of Equation~\eqref{eqn:ftau2} has precisely one non-zero term
which is $y_{\tau(\mathbf{i})}$. This is because the rest of the
terms are annihilated since in each of these terms there is some
index $j$ such that $\frac{1-x_j}{2}=1- i_j$ which makes $(\frac{1 +
(-1)^{i_{j}}x_{j}}{2})=0$. Consequently this sum gives rise to
exactly the Addressing function in Equation~\eqref{eqn:ftau1} which
is defined as $f_{\tau}$ and consequently the equality in
Equation~\eqref{eqn:ftau2} follows. Equation~\eqref{eqn:ftau2l2}
follows easily from rearranging~\eqref{eqn:ftau2}.

Now we turn to $\dist_{\textsc{Accept}}.$

The distribution
$\dist_{\textsc{Accept}}$ is defined as follows: to make a draw from
$\dist_{\textsc{Accept}}$,

\begin{enumerate}
\item First uniformly choose a subset $T$ of $R/2$ variables from
$\{y_0,\dots,y_{n-r-1}\}$;
\item Next, replace the variables
$z_{0},\dots,z_{R/2-1}$ in the function
\[\textsc{Addressing}(x_1, \ldots, x_r, z_{0}, \ldots,
z_{R-1})\] with the variables in $T$ (choosing the variables from $T$
in a uniformly random order).
\item Finally, for each $\mathbf{i}=0,\dots,R/2-1$ do the following:
    if variable $y_j$ was used to replace variable $z_{\mathbf{i}}$ in the previous
step, let $s_{\mathbf{i}}$ be a fresh uniform random $\pm 1$ value and replace
variable $z_{R-1-\mathbf{i}}$ with $s_{\mathbf{i}} y_j$. Return the resulting function.
\end{enumerate}
Observe that for any integer $0 \leq \mathbf{i} < R/2$ with binary
expansion $\mathbf{i} = i_1  \circ i_2 \circ
 \cdots  \circ i_r$, we have that the binary expansion of
$R-1-\mathbf{i}$ is $\conj{i_1} \circ \conj{i_2} \circ \cdots \circ
\conj{i_r}$.  Thus steps (2) and (3) in the description of making a
draw from $\dist_{\textsc{Accept}}$ may be restated as follows in
terms of the decision tree representation for $\textsc{Addressing}$:
\begin{itemize}
\item [$2'.$] Place the variables $y_j\in T$ randomly among the leaves of the
decision tree with index less than $R/2$.
\item [$3'.$] For each variable $y_j\in T$ placed at the leaf with index
$\mathbf{i}=i_1 \circ  i_2 \circ \cdots \circ i_r<R/2$ above, throw
a $\pm 1$ valued coin $s_{\mathbf{i}}$ and place $s_{\mathbf{i}}
y_j$ at the antipodal leaf location with index:
            $\conj{\mathbf{i}}=\conj{i_1} \circ \conj{i_2} \circ \cdots \circ \conj{i_r}=R-1-\mathbf{i}$.
\end{itemize}

Equivalently, if we write $g_{\tau,s}$ to denote the following function \emph{over $n$ variables}
\begin{align}
&g_{\tau,s}(x_1,\ldots, x_r, y_{0}, \ldots, y_{n-r-1})=\notag\\
&\textsc{Addressing}(x_1, \ldots, x_r, y_{\tau(0)},
\ldots,y_{\tau(R/2-1)}, s_{(R/2-1)}y_{\tau(R/2-1)},
\ldots  ,s_{0}y_{\tau(0)});\label{eqn:gtaus}
\end{align}
a draw from
$\dist_{\textsc{Accept}}$ is a function chosen uniformly at random
from the set
$C_{\textsc{Accept}}= \{g_{\tau,s}\}$
where $\tau$ ranges over all permutations of $\{0,\dots,n-r-1\}$ and
$s$ ranges over all of $\{-1,1\}^{R/2}$. It is clear that every function in
$C_{\textsc{Accept}}$ depends on at most $r+R/2=k$ variables, and
thus is indeed a $k$-junta.

By considering the contribution
to the Fourier spectrum from each pair of leaves
$\mathbf{i},\conj{\mathbf{i}}$ of the decision tree, we obtain the
following expression for the Fourier expansion of each function in
the support of $\dist_{\textsc{Accept}}$:
\small
\begin{align}
    g_{\tau,\mathbf{s}}(x_1, \ldots, x_r, y_{0}, \ldots, y_{n-r-1})=&\sum_{\mathbf{i}=i_{1}i_{2}\ldots i_{r}=0}^{R/2-1} y_{\tau(\mathbf{i})}
(\frac{1 + (-1)^{i_{1}}x_{1}}{2})(\frac{1 + (-1)^{i_{2}}x_{2}}{2})\ldots(\frac{1 + (-1)^{i_{r}}x_{r}}{2})\notag\\
 +&\sum_{\mathbf{i}=0}^{R/2-1} s_{\mathbf{i}} y_{\tau(\mathbf{i})}
(\frac{1 + (-1)^{\conj{i_{1}}}x_{1}}{2})(\frac{1 + (-1)^{\conj{i_{2}}}x_{2}}{2})\ldots(\frac{1 + (-1)^{\conj{i_{r}}}x_{r}}{2})\label{eqn:gtaus2}
\end{align}
\begin{equation}\label{eqn:gtaus2l2}
\text{[Since $(-1)^{\conj{i_{j}}}= -(-1)^{i_{j}}$]}\quad =\frac{1}{2^{r-1}}\sum_{\mathbf{i}=0}^{R/2-1}\begin{cases}\displaystyle
    \sum_{X\subseteq\{x_1,\ldots,x_r\}, |X|\ \text{even}} (-1)^{(\sum_{x_{j}\in X} i_{j})} y_{\tau(\mathbf{i})} \chi_{X} & \text{if $s_{\mathbf{i}}=1$;}\\
     \displaystyle \sum_{X\subseteq\{x_1,\ldots,x_r\}, |X|\ \text{odd}}  (-1)^{(\sum_{ x_{j}\in X} i_{j})} y_{\tau(\mathbf{i})} \chi_{X} & \text{if $s_{\mathbf{i}}=-1$.}\\
\end{cases}
\end{equation}
\normalsize
Just as in the Equation~\eqref{eqn:ftau2}, whenever
$\frac{1-x_1}{2}=i_1, \frac{1-x_2}{2}=i_2, \ldots, \frac{1-x_r}{2}=i_r$, the sum on the
RHS of Equation~\eqref{eqn:gtaus2} has precisely one non-zero term which is
$y_{\tau(\mathbf{i})}$ if $\mathbf{i}<R/2$ and $s_{R-1-\mathbf{i}} y_{\tau(R-1-\mathbf{i})}$ if $\mathbf{i}\geq R/2$.
Therefore this sum gives rise to exactly the Addressing function in Equation~\eqref{eqn:gtaus} which
is defined as $g_{\tau,s}$ and consequently the equality in Equation~\eqref{eqn:gtaus2} follows.

It follows that for each $g_{\tau,\mathbf{s}}$ in the support of
$\dist_{\textsc{Accept}}$ and for any fixed $y_j$, all elements of
the set $\{S \colon y_j \in S\ \text{and}\
\widehat{g_{\tau,\mathbf{s}}}(S)\neq 0\}$ will have the same parity.
Moreover, when draws from $\dist_{\textsc{Accept}}$ are considered, for
every distinct $y_j$ this odd/even parity is independent and
uniformly random.

Now we are ready to prove Theorem~\ref{S1QThm2}. Recall that a $\bj$
oracle query returns $S$ with probability $|\hat{f}(S)|^{2}$ for
every subset $S$ of input variables to the function. Considering the equations
\eqref{eqn:ftau2l2} and \eqref{eqn:gtaus2l2}, for any $f$ in $C_{\textsc{Accept}}$ or
$C_{\textsc{Reject}}$ its $\bj$ oracle will return a pair of the form $(y_{j=\tau(\mathbf{i})},X),\ X\subseteq\{x_1,\ldots,x_r\}$.

Let us define a set ${\cal T}$ of ``typical'' outcomes from $\bj$
oracle queries.  Fix any $N=o(\sqrt{k})$, and let ${\cal T}$ denote
the set of all sequences $\{(y_{j_1},X_1),\ldots,(y_{j_N},X_N)\}$ of
length $N$ which have the property that {\em no $y_{i}$ occurs
more than once among $y_{j_1},\dots,y_{j_N}$}.

Note that for any fixed $f_{\tau}\leftarrow \dist_{\textsc{Reject}}$,
every non-zero Fourier coefficient $\widehat{f_{\tau}}(S)$ satisfies $|\widehat{f_{\tau}}(S)|^{2}=\frac{1}{2^{2r}}=\frac{1}{R^2}$
due to Equation~\eqref{eqn:ftau2l2}. Therefore after $f_{\tau}$ is drawn, for any fixed $y_j$
the probability of receiving a response of the form $(y_j, X)$ as the outcome of a $\bj$
query is either
\begin{description}
    \item[$=0$,]if $f_{\tau}$ is not a function of $y_j$,
i.e. $j\notin \{\tau(0),\ldots,\tau(R-1)\}$; or
\item[$=\frac{1}{R}$,]if $j\in \{\tau(0),\ldots,\tau(R-1)\}$. This is because each of the $2^{r}=R$ responses $(y_j, X)$
    occurs with probability $\frac{1}{R^2}$.
    \end{description}

Similarly, for any fixed $g_{\tau,\mathbf{s}}\leftarrow\dist_{\textsc{Accept}}$,
every non-zero Fourier coefficient $\widehat{g_{\tau,\mathbf{s}}}(S)$ satisfies $|\widehat{g_{\tau,\mathbf{s}}}(S)|^{2}=\frac{1}{2^{2r-2}}=\frac{4}{R^2}$
due to Equation~\eqref{eqn:gtaus2l2}. Therefore after $g_{\tau,\mathbf{s}}$ is drawn, for any fixed $y_j$
the probability of receiving a response of the form $(y_j, X)$ as the outcome of a $\bj$ query is either
\begin{description}
\item[$=0$,]if $g_{\tau,\mathbf{s}}$ is not a function of $y_j$,
i.e. $j\notin \{\tau(0),\ldots,\tau(R/2-1)\}$; or
\item[$=\frac{2}{R}$,]if $j\in \{\tau(0),\ldots,\tau(R/2-1)\}$. This is because each of the $2^{r-1}=R/2$ responses $(y_j, X)$
    occurs with probability $\frac{4}{R^2}$.
\end{description}

Now let us consider the probability of obtaining
a sequence from ${\cal T}$ under each scenario.
\begin{itemize}
\item If the function is drawn from $\dist_{\textsc{Reject}}$:  the
probability is at least \[1(1-1/R)(1-2/R)\ldots(1-N/R) > 1-o(1)\quad
\text{[by the Birthday Paradox]}.\]
\item If the function is from $\dist_{\textsc{Accept}}$:  the
probability is at least
\[1(1-2/R)(1-4/R)\ldots(1-2N/R) > 1-o(1) \quad \text{[by the Birthday
Paradox]}\]
\end{itemize}

Now the crucial observation is that whether the function is drawn from  $\dist_{\textsc{Reject}}$ or from $\dist_{\textsc{Accept}}$,
each sequence in ${\cal T}$ is equiprobable by symmetry in the construction. To see this,
simply consider the probability of receiving a fixed $(y_{j}, X)$ for some new $y_{j}$ in the next $\bj$ query
of an unknown function drawn from either one of these distributions. Using the above calculations
for $|\hat{f}(y_{j}, X)|^{2}$, one can directly calculate that these probabilities are equal in either scenario.
Alternatively, for a function drawn from $\dist_{\textsc{Accept}}$ one can observe that since each successive $y_{j}$
is ``new'', a fresh random bit determines whether the support is an $(y_{j}, X)$ with $|X|$ odd or even;
once this is determined, the choice of $X$ is uniform from all subsets with the correct parity. Thus
the overall draw of $(y_j, X)$ is uniform over all $X$'s. Considering that the subset of relevant variables $T, |T|=R/2$ is
uniformly chosen from $\{y_{0}, \ldots, y_{n-r-1}\}$, this gives the equality of the probabilities
for each $(y_j, X)$ with a new $y_{j}$ when the function is drawn from $\dist_{\textsc{Accept}}$. The argument
for the case of $\dist_{\textsc{Reject}}$ is clear.

Consequently the statistical difference between the distributions
corresponding to the sequence of outcomes of the $N$ $\bj$ oracle
calls under the two distributions is at most $o(1)$. Now
Fact~\ref{dpi} implies that no algorithm making only $N$ oracle
calls can distinguish between these two scenarios with high
probability.  This gives us the result, and concludes the
proof of Theorem~\ref{S1QThm2}.
\end{proof}

Intuitively, under either distribution on functions, each element of
a sequence of $N$ $\bj$ oracle calls will ``look like'' a uniform
random draw $X$ from subsets of $\{x_1,\ldots,x_r\}$ and $j$ from
$\{0,\ldots,n-r-1\}$ where $j$ and $X$ are independent. Note that
this argument breaks down at $N=\Theta(\sqrt{R})$. This is because
if the algorithm queried the $\bj$ oracle $\Theta(\sqrt{R})$ times
it will start to see some $y_i$'s more than once with constant probability
(again by the birthday paradox). But when the functions are drawn from
$\dist_{\textsc{Accept}}$ the corresponding $X_i$'s will always have
a fixed parity for a given $y_i$ whereas for functions drawn from
$\dist_{\textsc{Reject}}$ the parity will be random each time. This
will provide the algorithm with sufficient evidence to distinguish
with constant probability between these two scenarios.

\section{Learning juntas}
\label{sec:learnjuntas}
\subsection{Known results}

The problem of learning an unknown $k$-junta
has been well studied in the computational
learning theory literature, see e.g. \cite{MOS04,AR,Blum}.
The following classical lower bound will be a yardstick against
which we will measure our results.
\begin{lemma}\label{S2Lem1}
Any classical membership query algorithm for learning $k$-juntas
to accuracy $1/5$ must use
$\Omega(2^{k}+\log n)$ membership queries.
\end{lemma}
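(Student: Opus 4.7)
The bound is a maximum of two terms, and I would establish the $\Omega(2^k)$ and $\Omega(\log n)$ halves separately via Yao's minimax principle, each time exhibiting a hard distribution over $k$-juntas for which too few queries cannot leave enough information to pick a hypothesis that is $1/5$-close to the target.  The common framework is: any deterministic $q$-query membership-query algorithm is described by a decision tree of depth $q$, hence has at most $2^q$ leaves and produces at most $2^q$ distinct output hypotheses; so by Yao it suffices to fix any distribution over target $k$-juntas and lower-bound the number of distinct outputs required for success probability $\geq 2/3$.

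For the $\Omega(\log n)$ lower bound I would take the hard family to consist of the $n$ dictator functions $f_i(x)=x_i$ for $i=1,\dots,n$; each is a $1$-junta (hence a $k$-junta), and any two distinct dictators disagree on exactly half of $\bits^n$.  Since $1/5+1/5<1/2$, no single hypothesis $h$ can be $1/5$-close to two different dictators, so a deterministic $q$-query algorithm is $1/5$-correct on at most $2^q$ of the $n$ dictators.  Running the algorithm against a uniformly random dictator, success probability $\geq 2/3$ forces $2^q \geq 2n/3$, i.e.\ $q = \Omega(\log n)$.

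For the $\Omega(2^k)$ lower bound I would let $\mathcal{F}_0$ be the $2^{2^k}$ Boolean functions that depend only on $x_1,\dots,x_k$, and take the hard distribution to be uniform on $\mathcal{F}_0$.  For any fixed hypothesis $h$, the fraction of inputs on which $h$ disagrees with a uniform $g\in\mathcal{F}_0$ is distributed as $\mathrm{Bin}(2^k,1/2)/2^k$, so by a standard Chernoff tail bound
\[
\P_{g}\!\left[h \text{ agrees with } g \text{ on at least a } 4/5 \text{ fraction of inputs}\right] \;\leq\; 2^{-\Omega(2^k)}.
\]
Thus each individual output hypothesis of the algorithm is $1/5$-close to at most a $2^{-\Omega(2^k)}$ fraction of $\mathcal{F}_0$, and with only $2^q$ possible output hypotheses in total the algorithm is $1/5$-correct on at most a $2^q \cdot 2^{-\Omega(2^k)}$ fraction of targets drawn from $\mathcal{F}_0$.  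Demanding that the overall success probability exceed $2/3$ therefore forces $q=\Omega(2^k)$; combining with the previous paragraph yields $q=\Omega(2^k+\log n)$.

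The only subtlety I expect is the handling of randomization: one must invoke Yao's principle (equivalently, argue that for any randomized algorithm the \emph{expected} number of targets for which it outputs a $1/5$-close hypothesis is bounded by the same quantity) rather than pretending the algorithm is deterministic from the outset.  Beyond that the arguments are purely information-theoretic, and the only nontrivial ingredient is the Chernoff tail bound on $\mathrm{Bin}(2^k,1/2)$, which is the Additive Bound already recorded in Section~2.
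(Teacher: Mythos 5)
Your proof is correct, and it uses the same two hard families as the paper (the $n$ dictator functions for the $\Omega(\log n)$ half, and a uniformly random function of $x_1,\dots,x_k$ for the $\Omega(2^k)$ half), but the accounting differs in both halves. For $\Omega(\log n)$ the paper reduces $1/5$-learning of dictators to \emph{exact} learning and then cites the known $\Omega(\log |C|)$ membership-query lower bound for exact learning \cite{BCG+96}, whereas you prove the same thing from scratch by counting the at most $2^q$ leaves of the decision tree; your version is self-contained, the paper's is shorter. For $\Omega(2^k)$ the paper conditions on the transcript: after fewer than $2^k/2$ queries at least half of the truth-table entries of the target are still uniformly random, so by Chernoff the hypothesis errs on close to half of that unqueried mass, i.e.\ on nearly a $1/4 > 1/5$ fraction of inputs. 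You instead union-bound over the at most $2^q$ possible output hypotheses, using that any \emph{fixed} $h$ is $1/5$-close to at most an $e^{-\Omega(2^k)}$ fraction of $\mathcal{F}_0$; this is an equally valid route and has the mild advantage of not needing to reason about the conditional distribution at a leaf. Two small points to tidy up: (i) the disagreement of a fixed $h$ with a random $g\in\mathcal{F}_0$ is distributed exactly as $\mathrm{Bin}(2^k,1/2)/2^k$ only when $h$ itself depends on $x_1,\dots,x_k$ alone; for general $h$ it is $2^{-k}\sum_a D_a$ with independent $D_a$ supported on $\{p_a,1-p_a\}$ and mean $1/2$, so you need Hoeffding for independent (not identically distributed) bounded variables rather than the i.i.d.\ Additive Bound as literally stated in Section~2 --- the concentration conclusion is unaffected; (ii) your handling of randomization via Yao is exactly the right fix and is something the paper glosses over.
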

\begin{proof}
Consider the restricted problem of learning an unknown function
$f(x)$ which is simply a single Boolean variable from $\{x_1,
\dots,x_n\}$. Since any two variables disagree on half of all
inputs, any $1/5$-learning algorithm can be easily modified into an
algorithm that exactly learns an unknown variable with no more
queries.  It is well known that any set of $n$ concepts requires
$\Omega(\log n)$ queries for any exact learning algorithm that uses
membership queries only, see e.g. \cite{BCG+96}. This gives the
$\Omega(\log n)$ lower bound.

For the $\Omega(2^k)$ lower bound, we may suppose that
the algorithm ``knows'' that the junta has relevant variables
$x_1,\dots,x_k$.  Even in this case, if fewer than ${\frac 1 2}2^k$
membership queries are made the learner will have no information
about at least $1/2$ of the function's output values.
A straightforward application of the Chernoff bound shows that it is
very unlikely for such a learner's hypothesis to be $1/5$-accurate,
if the target junta is a uniform random function over the relevant
variables.  This establishes the result.
\end{proof}

Learning juntas from uniform random examples $\ex(f)$ is a seemingly
difficult computational problem.
Simple algorithms based on exhaustive search can learn from
$O(2^k \log n)$ examples but require $\Omega(n^k)$ runtime.
The fastest known algorithm in this setting,  due to
Mossel {\em et al.}, uses $(n^k)^{{\frac \omega {\omega + 1}}}$ examples
and runs in $(n^k)^{{\frac \omega {\omega + 1}}}$ examples
time,
where $\omega < 2.376$ is the matrix multiplication exponent \cite{MOS04}.

Bshouty and Jackson \cite{BSHJA} gave an algorithm using uniform quantum
examples from the $\qex$ oracle to learn general $\dnf$ formulas.
Their algorithm uses $\tilde{O}(n s^{6} \eps^{-8})$ calls to $\qex$ to learn an $s$-term
$\dnf$ over $n$ variables to accuracy $\eps$.  Since any $k$-junta is expressible
as a $\dnf$ with at most $2^{k-1}$ terms, their result immediately
yields the following statement.
\begin{theorem}[See \cite{BSHJA}]\label{S2Thm2} There exists an $\epsilon$-learning quantum algorithm
    for $k$-juntas using $\tilde{O}(n 2^{6k} \epsilon^{-8})$ quantum examples under the uniform
    distribution quantum PAC model.
\end{theorem}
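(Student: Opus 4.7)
The plan is to derive Theorem~\ref{S2Thm2} as an immediate corollary of the quantum $\dnf$ learning algorithm of Bshouty and Jackson~\cite{BSHJA}, exactly as the ``since any $k$-junta is expressible as a $\dnf$ with at most $2^{k-1}$ terms'' remark in the paragraph preceding the theorem suggests. Concretely, the only two ingredients needed are a standard $\dnf$-size bound for juntas and the quoted sample complexity of the Quantum Harmonic Sieve.

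First I would verify the $\dnf$-size claim. Let $f\colon \bits^n\to\bits$ be any $k$-junta, depending on some unknown set $J\subseteq[n]$ with $|J|\leq k$. Writing $f$ as a function of $|J|$ variables, its truth table has $2^{|J|}\leq 2^k$ rows; in standard $\dnf$ fashion, each row on which $f=-1$ (\textsc{True}) contributes one $|J|$-literal term, so $f$ admits a $\dnf$ representation with at most $2^k$ terms. By possibly replacing $f$ with $\conj{f}$ (which a learning algorithm may freely do, since an $\eps$-accurate hypothesis for one yields an $\eps$-accurate hypothesis for the other), we may assume that at most half of the $2^{|J|}$ rows are \textsc{True}, so $f$ is an $s$-term $\dnf$ with $s\leq 2^{k-1}$.

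Next I would invoke the Bshouty--Jackson theorem that the Quantum Harmonic Sieve, given access only to the uniform quantum example oracle $\qex(f)$, outputs with probability at least $2/3$ a hypothesis that is $\eps$-accurate with respect to $f$ under the uniform distribution, using $\tilde{O}(n s^{6}\eps^{-8})$ calls to $\qex(f)$ when $f$ is an $s$-term $\dnf$. Substituting the bound $s\leq 2^{k-1}$ obtained above yields a sample complexity of
\[
\tilde{O}\!\left(n\cdot(2^{k-1})^{6}\cdot\eps^{-8}\right)
=\tilde{O}\!\left(n\, 2^{6k}\,\eps^{-8}\right),
\]
which is exactly the bound asserted by the theorem.

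There is no substantive obstacle here: the statement is a one-line corollary of the quoted quantum $\dnf$ result, and the $\tilde{O}$ notation absorbs the polylogarithmic dependencies in $n$, $s=2^{k-1}$, and $1/\eps$ coming from the Harmonic Sieve analysis (confidence boosting, the precision needed inside the sieve, etc.), so no additional calculation beyond the substitution $s=2^{k-1}$ is required. The only mild care point is the reduction to $s\leq 2^{k-1}$ via passing to $\conj{f}$, which is why the exponent in the theorem is $6k$ rather than $6(k-1)$ being tracked explicitly.
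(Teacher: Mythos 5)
Your proposal is correct and matches the paper exactly: the paper gives no separate proof, deriving the theorem in one line from the Bshouty--Jackson bound $\tilde{O}(n s^{6}\eps^{-8})$ together with the observation that a $k$-junta is a $\dnf$ with at most $2^{k-1}$ terms. (Your complementation step to get $s\leq 2^{k-1}$ rather than $2^{k}$ is harmless but unnecessary, since the factor $2^{6}$ is absorbed into the $\tilde{O}$ anyway.)
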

Note that \cite{BSHJA} did not try to optimize the quantum query complexity of their algorithms
in the special case of learning juntas. In contrast, our goal is to obtain a more efficient algorithm for juntas.

The lower bound of \cite[Observation~6.3]{AS05} for learning with quantum membership queries for an
arbitrary concept class
can be rephrased for the purpose of learning $k$-juntas as follows.
\begin{fact}[See \cite{AS05}] \label{S2Fac1}
Any algorithm for learning $k$-juntas to accuracy
$\eps = 1/10$ with quantum membership queries must use
$\Omega(2^k)$ queries.
\end{fact}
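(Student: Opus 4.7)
The plan is to reduce the quantum junta--learning problem to the problem of recovering a uniformly random bit string via quantum queries, and then to invoke a random--access coding lower bound. First I would restrict the target class to the $2^{2^k}$ juntas whose relevant variables are exactly $x_1,\ldots,x_k$. Each such $f$ is in bijection with its truth table $z_f\in\bits^{2^k}$ (viewed as a map $\bits^k\to\bits$), and a $\qmq(f)$ query on $|x,b\rangle$ is exactly a quantum query to the $(x_1,\ldots,x_k)$-indexed coordinate of $z_f$. So any $\eps=1/10$--learner using $T$ quantum membership queries gives a $T$-query quantum algorithm that, on input oracle $z_f$, outputs an $h\in\bits^{2^k}$ agreeing with $z_f$ on at least a $9/10$ fraction of coordinates with probability $\geq 2/3$.

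Next I would put the uniform distribution on $z_f\in\bits^{2^k}$ and use a Chernoff bound to show that any fixed (deterministic) hypothesis $h$ agrees with a uniformly random $z_f$ on a $9/10$ fraction of coordinates with probability at most $2^{-\Omega(2^k)}$. Consequently, averaging over the random choice of $z_f$ and the learner's internal coins, the final hypothesis $h$ must serve as a random-access decoder for $z_f$: each coordinate of $z_f$ is individually recovered from $h$ with probability at least $9/10-o(1)$.

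Finally, I would invoke Nayak's quantum random-access coding lower bound: any quantum state on $q$ qubits from which each of $N$ underlying random bits can be decoded individually with probability $\geq p$ must satisfy $q \geq (1-H(p))N$. The key structural claim is that the final register of any $T$-query quantum algorithm carries at most $O(T)$ qubits worth of information about the oracle string in this Nayak sense; combined with $N=2^k$ and $p=9/10$ (so that $1-H(9/10)=\Omega(1)$), this forces $T=\Omega(2^k)$, as desired.

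The main obstacle is the last step: the ``$T$ quantum queries yield $O(T)$ bits of Nayak-style random-access information about the oracle'' claim is not a direct consequence of Holevo, because the intermediate states of a query algorithm live in a potentially huge Hilbert space. Observation~6.3 of \cite{AS05} handles this by a hybrid argument that tracks how much the final output distribution of the algorithm can change as a single oracle bit is flipped, summing a ``per-query perturbation'' bound across the $T$ queries and converting the resulting total-variation bound into a Fano-type lower bound on the mutual information between $z_f$ and the algorithm's output. Putting this together with the packing argument from the second paragraph yields the claimed $\Omega(2^k)$ bound.
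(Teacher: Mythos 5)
Your first paragraph is exactly the paper's reduction: restrict to juntas whose relevant variables are $x_1,\dots,x_k$, observe that all queries may be taken to lie over the shattered set $\{|x,1^{n-k}\rangle\}_{x\in\bits^k}$, and note that the problem becomes learning a uniformly random $2^k$-bit truth table from quantum queries. The paper then simply invokes Observation~6.3 of \cite{AS05} (a VC-dimension lower bound for $\qmq$ learning over a shattered set) with VC dimension $2^k$, and you too ultimately defer the ``key structural claim'' to that same observation, so at bottom the two proofs coincide: the reduction is yours, the hard part is cited.

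Two points in your added scaffolding are inaccurate, though. First, from ``$h$ agrees with $z_f$ on a $9/10$ fraction of coordinates with probability $\geq 2/3$'' you cannot conclude that each coordinate of $z_f$ is individually recovered with probability $9/10-o(1)$; all that follows is that a uniformly random coordinate is recovered with probability at least $(2/3)(9/10)=3/5$ on average. This still feeds into the average-case form of Nayak's bound, since $1-H(3/5)=\Omega(1)$ for the binary entropy $H$, but the claim as written is not implied by the hypothesis (and the Chernoff/packing step does not repair it). Second, and more seriously, the mechanism you ascribe to Observation~6.3 --- summing a per-query perturbation bound over the $T$ queries --- is the BBBV hybrid argument, which gives $\sum_i\|\psi^z-\psi^{z\oplus e_i}\|^2=O(T^2)$ and hence only shows that $O(T^2)$ of the $N=2^k$ coordinates can be predicted with constant advantage; that route yields $T=\Omega(2^{k/2})$, not $\Omega(2^k)$. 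Getting the bound linear in the VC dimension requires a genuinely stronger tool (e.g.\ a counting bound on how many oracle strings $T$ quantum queries can distinguish, or the argument actually carried out in \cite{AS05}), so your sketch does not independently substantiate the claimed rate; the full strength of the statement rests on the black-box citation, exactly as in the paper's own two-line proof.
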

\begin{proof} Since we are proving a lower bound we may assume
that the algorithm is told in advance that the junta depends
on variables $x_1,\dots,x_k.$
Consequently we may assume that the algorithm makes all its queries
with nonzero amplitude only on inputs of
the form $|x, 1^{n-k} \rangle$.  Now \cite[Observation~6.3]{AS05} states
that any quantum algorithm which makes queries only over
a shattered set (as is the set of inputs $\{|x,1^{n-k} \rangle\}_{x \in \bits^k}$ for the class of $k$-juntas) must make at least VC-DIM($C$)/100 $\qmq$
queries to learn with error rate at most $\epsilon= 1/10$; here
VC-DIM($C$) is the Vapnik-Chervonenkis dimension of concept class $C$.
Since the VC dimension of the class of all Boolean functions
over variables $x_1,\dots,x_k$ is $2^k$,
the result follows.
\end{proof}
This shows that  a $\qmq$ oracle cannot provide sufficient information to learn a $k$-junta using $o(2^k)$ queries to high accuracy. It is worth noting that there are other similar learning problems known where an $N$-query $\qmq$ algorithm can
exactly identify a target concept whose description length is $\omega(N)$ bits. For instance, a single $\bj$ oracle call (which can be implemented by a
single $\qmq$ query) can potentially give up to $k$ bits of information;
if the concept class $C$ is the class of all $2^k$ parity functions
over the first $k$ variables, then any concept in the class can be
exactly learned by a single $\bj$ oracle call.

Note that all the results we have discussed in this subsection concern
algorithms with access to only one type of oracle; this is in contrast
with the algorithm we present in the next section.
\subsection{A new learning algorithm}

The motivating question for this section is: ``Is it possible to reduce
the classical query/sample complexity drastically for the problem of
junta learning if the learning algorithm is also permitted to have
very limited quantum information?''
We will give an affirmative answer to this question by describing a new
algorithm that uses both $\bj$ queries (i.e. quantum examples) and
classical uniform random examples.

\begin{lemma}\label{S1QLem1}
    Let $\fisafunc$ be a function whose value depends on the set of variables $\mathcal{I}$. Then there is
    an algorithm querying the $\bj$ oracle $O(\epsilon^{-1}\log |\mathcal{I}|)$ times which w.h.p. outputs a list
    of variables such that
    \begin{itemize}
        \item the list contains all the variables $x_i$ for which $\Inf_i(f) \geq \eps$; and
        \item all the variables $x_j$ in the list have non-zero influence: $\Inf_j(f) > 0$.
    \end{itemize}
\end{lemma}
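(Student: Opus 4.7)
The plan is to use the identity $\Inf_i(f) = \sum_{S \ni x_i} |\hat{f}(S)|^2$ (recalled in Section~\ref{sec:bfour}), which gives the following clean interpretation of the $\bj$ oracle: a single draw $S \sim \bj(f)$ puts variable $x_i$ into $S$ with probability exactly $\Inf_i(f)$. Thus the algorithm is just to make $m = C \epsilon^{-1} \log |\mathcal{I}|$ calls to $\bj(f)$, obtain sets $S_1, \ldots, S_m$, and output the list $\mathcal{L} = \bigcup_{t=1}^m S_t$ of all variables that appear in any of these sets.

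Verifying the second bullet is essentially free: if $x_j \in \mathcal{L}$, then $x_j \in S_t$ for some $t$, and since $S_t$ was returned by $\bj(f)$ we must have $|\hat{f}(S_t)|^2 > 0$. Hence $\Inf_j(f) = \sum_{S \ni x_j} |\hat{f}(S)|^2 \geq |\hat{f}(S_t)|^2 > 0$, so $x_j$ truly has nonzero influence. Note in particular that every $x_j \in \mathcal{L}$ must lie in $\mathcal{I}$.

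For the first bullet, fix any variable $x_i$ with $\Inf_i(f) \geq \epsilon$. Each independent $\bj$ draw fails to include $x_i$ with probability $1 - \Inf_i(f) \leq 1 - \epsilon$, so the probability that $x_i \notin \mathcal{L}$ is at most $(1-\epsilon)^m \leq e^{-\epsilon m}$. Because $f$ depends only on variables in $\mathcal{I}$, there are at most $|\mathcal{I}|$ candidate high-influence variables, so a union bound gives failure probability at most $|\mathcal{I}| \cdot e^{-\epsilon m}$. Choosing $m = O(\epsilon^{-1} \log |\mathcal{I}|)$ with a sufficiently large constant drives this below any desired constant (e.g., $1/10$), yielding the w.h.p. guarantee.

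The argument is essentially a coupon-collector style calculation, and there is no real obstacle beyond choosing the constant in $m$ correctly. The only subtle point worth emphasizing is that the union bound is taken over the (at most $|\mathcal{I}|$) \emph{relevant} variables of $f$ rather than over all $n$ variables; this is what lets the query complexity depend on $\log |\mathcal{I}|$ instead of $\log n$, and it is crucial for the downstream junta-learning algorithm to avoid any $n$-dependence in its quantum sample complexity.
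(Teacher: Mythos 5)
Your proposal is correct and follows essentially the same route as the paper's proof: query $\bj(f)$ $O(\epsilon^{-1}\log|\mathcal{I}|)$ times, take the union of the returned sets, use $\Inf_i(f)=\sum_{S\ni x_i}|\hat f(S)|^2$ to bound the per-variable miss probability by $(1-\epsilon)^m$, and union-bound over the at most $|\mathcal{I}|$ relevant variables. Your explicit verification of the second bullet (and the remark that the union bound is over relevant variables only) merely spells out details the paper leaves implicit.
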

\begin{proof} The algorithm simply queries the $\bj$ oracle  $N = O(\eps^{-1} \log |{\cal I}|)$ many times and outputs the union of all the sets of variables received
as responses to these queries.

If $\Inf_i(f) \geq \eps$ then the probability that $x_i$ never occurs in any
response obtained from the $N$ $\bj$ oracle calls is at most $(1 - \eps)^{N}
\leq {\frac 1 {10 |{\cal I}|}}.$  The union bound now yields that with probability at least $9/10$, every $x_i$ with $\Inf_i(f) \geq \eps$ is output by the algorithm.
\end{proof}

\begin{theorem}\label{S2Thm3} There is an efficient algorithm $\epsilon$-learning $k$-juntas with $O(\epsilon^{-1} k\log k)$ queries of the
    $\bj$ oracle and $O(2^k \log(\eps^{-1}))$ random examples.
\end{theorem}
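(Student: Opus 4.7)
The plan is to use the $\bj$ oracle only to identify a set $L$ of ``important'' variables, and to use random examples solely to determine the function's value on each assignment to $L$. The key is to set the influence threshold in Lemma~\ref{S1QLem1} to $\Theta(\epsilon/k)$: this yields a set $L$ that captures essentially all the Fourier mass of $f$, at a cost of $O(\epsilon^{-1} k \log k)$ $\bj$ queries.

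First I would invoke Lemma~\ref{S1QLem1} with threshold $\epsilon_0 = \Theta(\epsilon/k)$. Since $f$ is a $k$-junta its set $\mathcal{I}$ of relevant variables has $|\mathcal{I}| \le k$, so the lemma uses $O(\epsilon_0^{-1}\log k) = O(\epsilon^{-1} k \log k)$ oracle calls and w.h.p.\ returns a set $L$ with $|L| \le k$ containing every variable of influence at least $\epsilon_0$. The only variables $L$ can miss are relevant variables of influence below $\epsilon_0$, and there are at most $k$ of them; since only relevant variables have nonzero influence,
\[
\sum_{S \not\subseteq L} \hat{f}(S)^{2} \;\le\; \sum_{i \notin L} \Inf_{i}(f) \;\le\; k \cdot \epsilon_{0} \;=\; O(\epsilon).
\]

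Next, I would draw $m = O(2^{k}\log(\epsilon^{-1}))$ uniform random examples. For each assignment $\alpha \in \bits^{|L|}$ that appears as the $L$-restriction of some drawn example, define $h(\alpha)$ to be the $f$-label of (say) the first such example; for assignments that never appear, set $h(\alpha)$ arbitrarily. Output $h$, viewed as a function $\bits^{n}\to\bits$ depending only on the coordinates in $L$.

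To analyze, I would bound the expected error $\E\bigl[\P_{x}[h(x)\neq f(x)]\bigr]$, where the outer expectation is over the algorithm's random examples. For a ``seen'' assignment $\alpha$, both $h(\alpha)$ and the value $f(x)$ on an independent fresh $x$ with $x_{L}=\alpha$ are i.i.d.\ draws from the distribution of $f$ conditioned on $x_{L}=\alpha$, so with $p_{\alpha} := \P[f=1 \mid x_{L}=\alpha]$ and $g(\alpha) := \E[f \mid x_{L}=\alpha] = 2p_{\alpha}-1$ their disagreement probability is $2p_{\alpha}(1-p_{\alpha}) = \tfrac12 - \tfrac12 g(\alpha)^{2}$. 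Averaging over uniform $\alpha$ and invoking Parseval yields
\[
\E_{\alpha}\bigl[2p_{\alpha}(1-p_{\alpha})\bigr] \;=\; \tfrac12\Bigl(1 - \sum_{S\subseteq L}\hat{f}(S)^{2}\Bigr) \;=\; \tfrac12 \sum_{S \not\subseteq L}\hat{f}(S)^{2} \;=\; O(\epsilon).
\]
The ``unseen'' contribution is at most $(1-2^{-|L|})^{m} \le e^{-m/2^{k}} = O(\epsilon)$ by our choice of $m$. Combining the two pieces, $\E\bigl[\P_{x}[h\neq f]\bigr] = O(\epsilon)$, and Markov's inequality (after adjusting hidden constants and boosting the confidence of Lemma~\ref{S1QLem1}) gives the required $2/3$ success probability.

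The main obstacle is avoiding the naive $\Omega(\epsilon^{-2})$ sample cost per bucket that a majority-vote estimator would incur when $|g(\alpha)|$ is small, which would blow the sample complexity up to $\Omega(2^{k}/\epsilon^{2})$. The trick is to use a single representative example per bucket: this merely doubles the per-bucket loss, and Parseval then converts that loss directly into the Fourier mass outside $L$, which we arranged to be $O(\epsilon)$ via the choice of influence threshold. The remaining $\log(\epsilon^{-1})$ factor in the sample count is exactly the coupon-collector-style cost of covering enough buckets so that the unseen contribution is also $O(\epsilon)$.
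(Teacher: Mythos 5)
Your proposal is correct, and your algorithm is exactly the paper's Algorithm~1: Stage~1 runs Lemma~\ref{S1QLem1} at influence threshold $\Theta(\eps/k)$, and Stage~2 assigns each bucket of assignments to the discovered variables the label of a single representative random example. The only real divergence is in how the central quantity --- the expected disagreement between $f(x)$ and $f(x')$ for $x,x'$ agreeing on the discovered set --- is estimated. The paper bounds it combinatorially (Fact~\ref{fac:stupidfact}): it telescopes the flip of the missed low-influence coordinates into single-bit flips and union-bounds by the sum of their influences, then averages over the random ``list of assignments'' $\Gamma$ and handles the unseen buckets via a somewhat delicate equivalence with a uniformly chosen subset $\mathcal{S}$ of seen assignments. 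You instead compute the per-bucket disagreement exactly as $2p_{\alpha}(1-p_{\alpha})=\tfrac12(1-g(\alpha)^2)$ and apply Parseval to the conditional expectation $g=\sum_{S\subseteq L}\hat f(S)\chi_S$, reducing everything to the Fourier mass outside $L$, which you then bound by $\sum_{i\notin L}\Inf_i(f)\le k\eps_0$ --- the same influence sum the paper uses, reached through the identity $\Inf_i(f)=\sum_{S\ni i}\hat f(S)^2$ rather than through bit flips. Your route is arguably cleaner: the exact Parseval identity makes transparent why a single representative per bucket loses only a factor of two over the Bayes-optimal label, and bounding the unseen contribution directly by $(1-2^{-|L|})^m$ in expectation (with a fixed sample budget $m$ rather than the paper's stopping rule) avoids the paper's argument about the distribution of the seen set. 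Both analyses yield the same $O(\eps^{-1}k\log k)$ $\bj$-query and $O(2^k\log(\eps^{-1}))$ random-example bounds.
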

\begin{proof}  We claim Algorithm~1 satisfies these requirements.

\begin{algorithm}[h]
        \caption{The junta learning algorithm.}
    \begin{algorithmic}[1]
        \STATE \textbf{Input:} $\epsilon>0, \bj(f), \ex(f)$.
        \STATE \textbf{Stage 1:}
        \STATE Construct a set containing all variables of $f$ with an influence at least
        $(\epsilon/10k)$ using the algorithm in Lemma~\ref{S1QLem1}.
        Let $\mathcal{A}$ be the final result.
        \STATE $\forall \mathbf{a}\in\bits^{|\mathcal{A}|}, encountered(\mathbf{a})\leftarrow \textsc{False}$.
        \STATE \textbf{Stage 2:}
        \REPEAT
        \STATE $\langle x, f(x) \rangle \leftarrow$ Draw from $\ex(f)$. Let $x|_{\mathcal{A}}$ denote
        the projection of $x$ onto the variables in $\mathcal{A}$.
        \IF{$encountered(x|_{\mathcal{A}})=\textsc{False}$}
        \STATE $value(x|_{\mathcal{A}})\leftarrow f(x), encountered(x|_{\mathcal{A}}) \leftarrow \textsc{True}$.
        \ENDIF
        \UNTIL{ $\text{For at least $(1-\epsilon/3)$ fraction of all}\ \mathbf{a}\in\bits^{|\mathcal{A}|}, encountered(\mathbf{a})=\textsc{True}$}.
        \STATE Output the hypothesis:
        \[H(x)=\begin{cases}
            value(x|_{\mathcal{A}}) & \text{if}\ encountered(x|_{\mathcal{A}})= \textsc{True}\\
            \text{\textsc{True}} & \text{otherwise}.\end{cases}\]
    \end{algorithmic}
\end{algorithm}

    Assume we are given a Boolean function $f$ whose value depends on the set of variables $\mathcal{I}$ with
    $|\mathcal{I}|\leq k$. By Lemma~\ref{S1QLem1}, $O(\epsilon^{-1} k\log k)$ queries of the $\bj$ oracle will reveal
    all variables with influence at least $(\epsilon/10k)$ with high probability during Stage 1.

    Assuming the algorithm of Lemma~\ref{S1QLem1} was successful, we group the variables as follows:

\bigskip

    \begin{tabular}{c c}
        Group  & Description \\ \hline
        $\mathcal{A}$ & The set of variables encountered in Stage 1.\\
        $\mathcal{B}$ & The set of relevant variables $\mathcal{I} \setminus \mathcal{A}$.\\
        $\mathcal{C}$ & The remaining $n-|\mathcal{I}|$ variables the function does not depend on.\\
    \end{tabular}

\bigskip

    Note that $|\mathcal{A}|+|\mathcal{B}|\leq k$ by Lemma~\ref{S1QLem1} and by the assumption that $f$ is a $k$-junta.
    \medskip

    We reorder the variables of $f$ so that the new order is $\mathcal{A}, \mathcal{B}, \mathcal{C}$
    for notational simplicity, i.e. $f$ is now considered to be over
    $(a_1,\ldots,a_{|\mathcal{A}|}, b_1, \ldots, b_{|\mathcal{B}|}, c_1, \ldots, c_{|\mathcal{C}|})$.
    We will denote an assignment to these variables by $(\mathbf{a},\mathbf{b},\mathbf{c})$.

    In Stage 2 the algorithm draws random examples until at least $(1 - \eps/3)$ fraction of all assignments
    to the variables in $\mathcal{A}$ are observed. Let us call this set of assignments by $\mathcal{S}$,
    and for every $\mathbf{a}\in \mathcal{S}$, let us denote the first example $\langle x,f(x)\rangle$ drawn in Stage 2
    for which $x|_{\mathcal{A}}=\mathbf{a}$ by $x=(\mathbf{a}, \mathbf{b}^{\mathbf{a}}, \mathbf{c}^{\mathbf{a}})$.
    At the end of the algorithm, the following hypothesis is produced as the output:
    $$H(\mathbf{a},\ast,\ast)=\begin{cases}
        f(\mathbf{a}, \mathbf{b}^{\mathbf{a}}, \mathbf{c}^{\mathbf{a}}) & \text{if}\ \mathbf{a}\in \mathcal{S}\\
        \textsc{True} & \text{otherwise}.\end{cases}$$
    In other words, the value of the hypothesis only depends on the setting of the variables in $\mathcal{A}$.
    Observe the probability that any given setting of a fixed set of variables in $\mathcal{A}$ has not been seen
    can be made less than $\eps/50$ using $O(\log(\eps^{-1}) 2^k)$ uniform random examples. Therefore the linearity
    of expectation implies that after $O(\log(\eps^{-1}) 2^k)$ random examples, the expected fraction of unseen
    assignments is $<\eps/50$. Thus by Markov's Inequality the fraction of unseen assignments will be $\leq\eps/3$
    w.h.p. Hence Stage 2 will terminate w.h.p. after $O(\log(\eps^{-1}) 2^k)$
    random examples. Consequently, the whole algorithm terminates with high probability with the desired query consumption.
    All we need to verify is that the hypothesis constructed is $\epsilon$-accurate.

    \textbf{The hypothesis $H$ is $\epsilon$-accurate with high probability:}

    We introduce some notation: Let $\mathbb{B}=\bits$; and given two strings
    $u, v\in\mathbb{B}^{\ell}$, let $u \odot v$ denote the bitwise multiplication between $u, v$; and let $|u|$ denote the
    total number of $-1$'s in $u$. Also let $\mathbf{1}_{W}$ denote the indicator function that takes value $1$ if
    $W$ holds and value $0$ if $W$ is false.

    We start with the following fact:
    \begin{fact}\label{fac:stupidfact}
        For any $s\in\mathbb{B}^{|\mathcal{B}|}$, we have $\frac{1}{2^{n}} \displaystyle\sum_{\mathbf{a}\in\mathbb{B}^{|\mathcal{A}|}}
        \sum_{\mathbf{b}\in\mathbb{B}^{|\mathcal{B}|}} \sum_{\mathbf{c}\in\mathbb{B}^{|\mathcal{C}|}}\mathbf{1}_{[f(\mathbf{a},\mathbf{b}\odot s,\mathbf{c})\neq f(\mathbf{a},\mathbf{b},\mathbf{c})]}< \eps/10$.
    \end{fact}
    \begin{proof}
    Given any string $ s\in \mathbb{B}^{|\mathcal{B}|}$, clearly there exists a sequence of $| s|+1$
    strings:
    $$1^{|\mathcal{B}|}= u^{1},  u^{2}, \ldots,  u^{|s|+1}= s,\ \text{where}\ u^{i}\in\mathbb{B}^{|\mathcal{B}|},\ \text{and for}\ i=1,\ldots,s,\ |u^{i}\odot u^    {i+1}|=1.$$
    Therefore,
    \begin{align*}
        \text{For any}\ s\in\mathbb{B}^{|\mathcal{B}|},&\quad \frac{1}{2^{n}} \displaystyle\sum_{\mathbf{a}\in\mathbb{B}^{|\mathcal{A}|}}
        \sum_{\mathbf{b}\in\mathbb{B}^{|\mathcal{B}|}} \sum_{\mathbf{c}\in\mathbb{B}^{|\mathcal{C}|}}\mathbf{1}_{[f(\mathbf{a},\mathbf{b}\odot s,\mathbf{c})\neq f(\mathbf{a},\mathbf{b},\mathbf{c})]} \\
        &\leq\frac{1}{2^{n}} \displaystyle\sum_{\mathbf{a}\in\mathbb{B}^{|\mathcal{A}|}}
        \sum_{\mathbf{b}\in\mathbb{B}^{|\mathcal{B}|}} \sum_{\mathbf{c}\in\mathbb{B}^{|\mathcal{C}|}}\sum_{i=1}^{|s|}\mathbf{1}_{[f(\mathbf{a},\mathbf{b}\odot u^{i+1},\mathbf{c})\neq f(\mathbf{a},\mathbf{b}\odot u^{i},\mathbf{c})]} \\
        &=\sum_{i=1}^{|s|}\underbrace{\left(\frac{1}{2^{n}} \displaystyle\sum_{\mathbf{a}\in\mathbb{B}^{|\mathcal{A}|}}
        \sum_{\mathbf{b}\in\mathbb{B}^{|\mathcal{B}|}} \sum_{\mathbf{c}\in\mathbb{B}^{|\mathcal{C}|}}\mathbf{1}_{[f(\mathbf{a},\mathbf{b}\odot u^{i}\odot u^{i+1},\mathbf{c})\neq f(\mathbf{a},\mathbf{b},\mathbf{c})]}\right)}_{=\text{The influence of the unique variable $b_{j(i)}$ that takes value $-1$ in $u^{i+1}\odot u^{i}$}}\\
        &< \eps/10.\quad [\text{Since every $b_j\in\mathcal{B}$ has influence $<\frac{\eps}{10k}$ and $|\mathcal{B}|\leq k$}]
    \end{align*}
    \end{proof}

    For each $\mathbf{a}\in\mathbb{B}^{|\mathcal{A}|}$, consider a fixed setting of strings $\mathbf{b}^{\mathbf{a}}\in\mathbb{B}^{|\mathcal{B}|}$,
    $\mathbf{c}^{\mathbf{a}}\in\mathbb{B}^{|\mathcal{C}|}$. Let us call the list of all these assignments $\Gamma$, i.e.
    $\Gamma=\{\forall\mathbf{a}\in\mathbb{B}^{|\mathcal{A}|}, (\mathbf{a}, \mathbf{b}^{\mathbf{a}}, \mathbf{c}^{\mathbf{a}})\}.$
    For any such ``list of assignments'' $\Gamma$, we define the function $F_{\Gamma}\colon\bits^n\to\bits$ as follows:
    $F_{\Gamma}(\mathbf{a},\ast,\ast)= f(\mathbf{a}, \mathbf{b}^{\mathbf{a}}, \mathbf{c}^{\mathbf{a}})$.
    The error incurred by approximating $f$ by $F_{\Gamma}$ is:
    \[  \P_{(\mathbf{a},\mathbf{b},\mathbf{c})}[F_{\Gamma}(\mathbf{a},\mathbf{b},\mathbf{c})\neq f(\mathbf{a},\mathbf{b},\mathbf{c})]
        =\P_{(\mathbf{a},\mathbf{b},\mathbf{c})}[f(\mathbf{a},\mathbf{b}^{\mathbf{a}},\mathbf{c}^{\mathbf{a}})\neq f(\mathbf{a},\mathbf{b},\mathbf{c})]
    \]
    \[=\P_{(\mathbf{a},\mathbf{b},\mathbf{c})}[f(\mathbf{a},\mathbf{b}^{\mathbf{a}},\mathbf{c})\neq f(\mathbf{a},\mathbf{b},\mathbf{c})]\quad
        [\text{Since $f$ does not depend on the variables in $\mathcal{C}$}]
    \]
    \begin{equation}
        =\frac{1}{2^{n}} \displaystyle\sum_{\mathbf{a}\in\mathbb{B}^{|\mathcal{A}|}}
        \sum_{\mathbf{b}\in\mathbb{B}^{|\mathcal{B}|}} \sum_{\mathbf{c}\in\mathbb{B}^{|\mathcal{C}|}}\mathbf{1}_{[f(\mathbf{a},\mathbf{b}^{\mathbf{a}},\mathbf{c})\neq f(\mathbf{a},\mathbf{b},\mathbf{c})]}
        =\frac{1}{2^{n}} \displaystyle\sum_{\mathbf{a}\in\mathbb{B}^{|\mathcal{A}|}} \sum_{s\in\mathbb{B}^{|\mathcal{B}|}} \sum_{\mathbf{c}\in\mathbb{B}^{|\mathcal{C}|}}\mathbf{1}_{[f(\mathbf{a},\mathbf{b}^{\mathbf{a}},\mathbf{c})\neq f(\mathbf{a},\mathbf{b}^{\mathbf{a}}\odot s,\mathbf{c})]}\label{eqn:fgamma}
    \end{equation}

    Therefore if we consider the expected value of the incurred error $\P[F_{\Gamma}\neq f]$ over all ``lists of assignments'' $\Gamma$, equation~\eqref{eqn:fgamma}
    implies that:
    \begin{align*}
        \E_{\Gamma}[\P_{(\mathbf{a},\mathbf{b},\mathbf{c})}[F_{\Gamma}\neq f]]&=\frac{1}{2^{|\mathcal{B}|}}\sum_{s\in\mathbb{B}^{|\mathcal{B}|}}\underbrace{\left(\frac{1}{2^{n}} \displaystyle\sum_{\mathbf{a}\in\mathbb{B}^{|\mathcal{A}|}}
        \sum_{\mathbf{b}^{\mathbf{a}}\in\mathbb{B}^{|\mathcal{B}|}} \sum_{\mathbf{c}\in\mathbb{B}^{|\mathcal{C}|}}\mathbf{1}_{[f(\mathbf{a},\mathbf{b}^{\mathbf{a}}\odot s,\mathbf{c})\neq f(\mathbf{a},\mathbf{b}^{\mathbf{a}},\mathbf{c})]}\right)}_{< \eps/10,\ \text{due to Fact~\ref{fac:stupidfact}}}\\
        &< \eps/10.
    \end{align*}
    Consequently, the expected error of approximating $f$ by a uniformly chosen $F_{\Gamma}$ is less than $\eps/10$. This also implies that for a
    uniformly chosen subset $\mathcal{S}$ of assignments to variables in $\mathcal{A}$ with size $(1-\eps/3)2^{|\mathcal{A}|}$,
    the expected error over $\mathcal{S}$ satisfies: $\E_{\Gamma}[\P_{\begin{subarray}{c}(\mathbf{a},\mathbf{b},\mathbf{c})\\
    \mathbf{a}\in \mathcal{S}
\end{subarray}}[F_{\Gamma}\neq f]]< \eps/10$. Therefore by Markov's Inequality, we obtain the following observation:
\begin{observation}\label{obs:jlproof}For a uniformly chosen subset $\mathcal{S}$ and $F_{\Gamma}$
as described above, $F_{\Gamma}$ will agree with $f$ on $(1-\eps/3)$ fraction of the coordinates
$\{(\mathbf{a},\mathbf{b},\mathbf{c}), \mathbf{a}\in \mathcal{S}\}$ with probability at least $7/10$.
\end{observation}

Now if we go back and recall what the algorithm does in Stage 2, we will observe that the generation of the hypothesis
in Stage 2 is equivalent to drawing a uniform $F_{\Gamma}$ and $\mathcal{S}$ as described and resetting the values of
$F_{\Gamma}$ at those coordinates $\{(\mathbf{a},\mathbf{b},\mathbf{c}), \mathbf{a}\notin \mathcal{S}\}$ to
\textsc{True}. This is because the algorithm only draws classical random examples during Stage 2. Therefore due to
Observation~\ref{obs:jlproof}, the hypothesis will disagree with $f$ on at most
$$\underbrace{1-(1-\eps/3)^{2}}_{\text{The error incurred by $(\mathbf{a},\mathbf{b},\mathbf{c}), \mathbf{a}\in \mathcal{S}$}}+\underbrace{\eps/3}_{\text{The error incurred by $(\mathbf{a},\mathbf{b},\mathbf{c}), \mathbf{a}\notin \mathcal{S}$}}< \eps$$ fraction of the inputs with overall probability at least $2/3$. This gives the desired result.
\end{proof}
Note that this algorithm
\begin{itemize}
    \item uses only a moderate number of quantum examples;
    \item has overall query complexity with no dependence on $n$, in contrast with known lower bounds
    (Lemma~\ref{S2Lem1}) for learning from classical membership
    queries;
    \item uses the $\ex$ oracle as its only source of classical information ($\mq$ queries are not
   used); and
    \item is computationally efficient.
\end{itemize}
One can compare this result to that of Theorem~\ref{S2Thm2} which
requires $\tilde{O}(n 2^{6k} \epsilon^{-8})$ quantum examples to
learn $k$-juntas. In contrast, our algorithm uses not only
substantially fewer quantum examples but also fewer uniform random
examples, which are considered quite cheap. Intuitively, this means
that for the junta learning problem, almost all the quantum queries
used by the algorithm of Bshouty and Jackson \cite{BSHJA} can in
fact be converted into ordinary classical random examples.

\subsubsection{Lower bounds}

The algorithm of Theorem~\ref{S2Thm3} is optimal in the following sense:
\begin{observation}\label{obsjlrn} Any $1/10$-learning quantum membership query algorithm for $k$-juntas
that uses only $\frac{1}{101} 2^k$ classical $\mq$ queries must
additionally use $\Omega(2^{k})$ $\qmq$ queries.
\end{observation}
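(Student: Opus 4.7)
The plan is to reduce to Fact~\ref{S2Fac1} by exploiting the fact that the quantum membership oracle $\qmq$ can simulate a classical $\mq$ query at no extra cost. Concretely, a classical query on input $x$ is just a $\qmq$ invocation on the basis state $\qst{x,1}$ followed by a measurement; so any algorithm $\alg$ that makes $q_c$ classical $\mq$ queries and $q_q$ quantum $\qmq$ queries can be converted into an algorithm $\alg'$ with the same success probability and output distribution that makes only $\qmq$ queries, using exactly $q_c+q_q$ of them.

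First, I would recall from the proof of Fact~\ref{S2Fac1} that the $\Omega(2^k)$ lower bound is really the explicit bound $\text{VC-DIM}(C)/100 = 2^k/100$: any $1/10$-learning $\qmq$ algorithm for $k$-juntas must make at least $2^k/100$ quantum membership queries, since the shattered set $\{\qst{x,1^{n-k}}\}_{x\in\bits^k}$ has size $2^k$. Applying this bound to $\alg'$ gives
\[
q_c + q_q \ \geq\ \frac{2^k}{100}.
\]

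Next, I would plug in the hypothesis of the observation, namely $q_c \leq \frac{1}{101}2^k$. Subtracting yields
\[
q_q \ \geq\ \frac{2^k}{100} - \frac{2^k}{101} \ =\ \frac{2^k}{100\cdot 101} \ =\ \Omega(2^k),
\]
which is exactly the desired conclusion. The choice of the specific constant $1/101$ in the hypothesis is precisely what makes the difference $1/100 - 1/101 > 0$ positive; any constant strictly below $1/100$ would suffice for the same argument with a different implicit constant.

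The main (indeed, only) potentially subtle step is justifying that converting $\alg$ into $\alg'$ preserves the learning guarantee without inflating the query count. This is immediate because an $\mq$ query is just a $\qmq$ query applied to the fixed basis state $\qst{x,1}$ whose second register is then read out in the computational basis; by deferred measurement, one may even assume $\alg'$ performs all such measurements at the end. Once this simulation is in hand, the observation follows directly from Fact~\ref{S2Fac1}, so no additional adversary or information-theoretic construction is needed beyond the one already used there.
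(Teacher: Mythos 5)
Your proposal is correct and is exactly the paper's argument: simulate each classical $\mq$ query by a $\qmq$ query and invoke Fact~\ref{S2Fac1}, with the arithmetic $2^k/100 - 2^k/101 = \Omega(2^k)$ spelled out (the paper leaves this implicit). No further comment is needed.
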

\begin{proof}
This statement easily follows from Fact~\ref{S2Fac1} since a
classical membership query can be simulated by a $\qmq$ query.
\end{proof}
Contrasting our junta learning algorithm with Observation~\ref{obsjlrn}, we see that
if the allowed number of classical examples or queries is decreased
even slightly from the $O(2^{k} \log \eps^{-1})$ used by
our algorithm to $\frac{1}{101} 2^k$, then an additional
$\Omega(2^{k})$ quantum queries are required, even if  $\qmq$
queries are allowed.

\section{Conclusion}

We have given some results on learning and testing $k$-juntas using
both quantum examples and classical random examples.  It would be
interesting to develop other testing and learning algorithms that
combine these two sorts of oracles, with the goal of minimizing the
number of quantum oracle calls required.

Another interesting goal for future work is to further explore the
power of the $\bj$ oracle.  Can the gap between our
$O(k/\eps)$-query upper bound and our $\Omega(\sqrt{k})$-query lower
bound for the $\bj$ oracle be closed?

\end{document}